\documentclass[11pt]{article}
\usepackage{styles}
\usepackage{physics}
\newcommand{\negl}{\mathrm{negl}}
\newcommand{\Exec}{\mathsf{Exec}}
\newcommand{\out}{\mathsf{out}}
\newcommand{\secp}{\lambda}
\newcommand{\win}{\mathsf{win}}
\newcommand{\inp}{\mathsf{in}}
\newcommand{\claw}{\mathsf{claw}}
\newcommand{\bbstate}{\mathsf{BB84}}
\newcommand{\blind}{\mathsf{blind}}

\title{Succinct Arguments for QMA from Collapsing Hash Functions}
\date{}
\author{James Bartusek\thanks{Columbia University \href{mailto:bartusek.james@gmail.com}{bartusek.james@gmail.com}} \and Giulio Malavolta\thanks{Bocconi University \href{mailto:giulio.malavolta@unibocconi.it}{giulio.malavolta@unibocconi.it}}}

\begin{document}

\maketitle
\begin{abstract}
    We prove the existence of succinct arguments for QMA, assuming only the existence of collapsing hash functions. This is the first scheme that relies only on unstructured ``Minicrypt'' assumptions, which are not known to imply public-key encryption. 

    Our main technical contribution is a quantum-succinct \emph{claw-state generation} protocol that allows us to bootstrap a small number of quantum correlations into an arbitrarily large number of claw-state correlations, using classical communication only. This improves upon the work of [Zhang, STOC 2021], having better round complexity, a proof in the standard model, and being overall much simpler. This yields a quantum-succinct blind delegation of quantum computation protocol from one-way functions, which we plug into the communication-compression compiler of [Bartusek, Liu, and Malavolta, EUROCRYPT 2026] to obtain succinct arguments for QMA.
\end{abstract}

\begingroup
\setlength{\parskip}{0pt}
\tableofcontents
\endgroup
\clearpage

\section{Introduction}

Succinct arguments enable the verification of mathematical statements using significantly fewer resources than required to process a complete proof. Pioneered by Kilian \cite{Kil92}, succinct arguments have become a pillar of study in both foundational and applied cryptography. In his work, Kilian showed how to compile any probabilistically checkable proof system (PCP) \cite{ALM+98,AS98} for NP into a succinct argument for NP using only collision-resistant hash functions.

Recent years have witnessed a surge of interest in quantum information-processing, giving rise to a natural follow-up question: Under what cryptographic assumptions do there exist succinct arguments for all of QMA, namely, what type of cryptographic structure enables the extremely efficient verification of statements with \emph{quantum} proofs? A series of recent works \cite{BKL+22,MNZ24,GTKNV25,BLM26} has culminated in a construction of succinct arguments for QMA from the following two cryptographic ingredients:
\begin{itemize}
    \item Collapsing hash functions: the post-quantum analogue of collision-resistance.
    \item Oblivious state preparation: a generic ``public-key'' style assumption that can be instantiated from LWE, LIP, or assumptions on cryptographic group actions \cite{BK25,BMM25}. 
\end{itemize}

The state of the art thus highlights a significant gap from the classical setting: While succinct arguments for NP only require unstructured cryptography in the form of collision-resistant hash functions (placing them in ``Minicrypt''), succinct arguments for QMA are only known from public-key style assumptions, positioning them in ``Cryptomania''.

In this work, we establish that succinct arguments for QMA do in fact live in Minicrypt by proving their existence assuming only collapsing hash functions \cite{Unr16}. As collapsing is the post-quantum analogue of collision-resistance, this exactly matches the weakest assumption under which Kilian's succinct arguments for NP are known to be post-quantum secure \cite{CMSZ22}.

\subsection{Results}

Our main result is stated below, where $\lambda$ denotes the security parameter. 

\begin{theorem}\label{thm:intro}
There exists a \emph{fixed} polynomial $\poly$ such that, assuming collapsing hash functions, there exists an argument system for QMA with completeness $1-\negl(\lambda)$, soundness error $\negl(\lambda)$, total communication bounded by $\mathrm{poly}(\lambda)$, and verifier runtime $|x|\cdot\mathrm{poly}(\lambda)$.

The protocol uses quantum communication (and thus a quantum verifier) and makes non-black-box use of the collapsing hash function.
\end{theorem}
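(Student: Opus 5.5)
The plan follows the route sketched in the abstract. We reduce Theorem~\ref{thm:intro} to two ingredients. The first is the communication-compression compiler of \cite{BLM26}. The second is a \emph{quantum-succinct} blind delegation of quantum computation (BQC) protocol built from one-way functions, which are implied by collapsing hash functions.

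The compiler of \cite{BLM26} takes a QMA verification procedure (a local-Hamiltonian or Clifford-plus-measurement verifier) together with a blind delegation protocol. It produces an argument system with the following properties:
\begin{itemize}
    \item classical communication is compressed with collapsing commitments, in Kilian style;
    \item quantum communication is only as large as the quantum communication of the underlying BQC protocol.
\end{itemize}
In \cite{BLM26} the required succinct BQC was instantiated using oblivious state preparation, and this is the only place a public-key assumption entered. So it suffices to build a BQC protocol whose quantum communication is a fixed $\mathrm{poly}(\lambda)$, independent of the size of the delegated computation. The protocol must use only one-way functions (plus collapsing hashes). It must also be blind, meaning the server's view is computationally independent of the client's input and circuit, and it must be verifiable in the composed sense the compiler needs.

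The core step is a claw-state generation protocol. Initially the client sends $\mathrm{poly}(\lambda)$ random BB84-style qubits (or EPR halves). After that only classical communication is used, and the two parties run a protocol that expands these into arbitrarily many claw states $\frac{1}{\sqrt2}(\ket{0,x_0}+\ket{1,x_1})$. The client knows $(x_0,x_1)$ and the server learns nothing about the hidden bit/phase correlations.

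Concretely, I would do the following.
\begin{enumerate}
    \item Use the seed qubits to run a blinded quantum one-time-pad evaluation of a circuit computing a PRF $F_k$ (from one-way functions).
    \item Keep the key $k$ hidden in one-time-padded seed qubits, so the server evaluates $F_k$ coherently on superposed inputs without learning $k$.
    \item Collapse the result by a classical measurement, leaving a two-preimage superposition whose preimages the client can recompute from $k$.
\end{enumerate}
Pad corrections are handled by the client classically. Non-Clifford gates in the PRF circuit would use gadgets consumed from the seed states, or a recursive/Yao-style garbling that is evaluated with classical messages only. Each claw state then serves as a remote-state-prepared $\ket{+_\theta}$ resource for the standard measurement-based blind computation (UBQC), via the usual claw-to-$\ket{+_\theta}$ conversion by Hadamard measurement of the preimage register.

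Security is proven by a hybrid argument. First replace $F_k$ by a truly random function, using PRF security together with blindness of the seed pad. Then argue that each generated state's hidden angle is uniformly random from the server's view, which yields blindness of UBQC. For the final soundness guarantee, plug the BQC into the \cite{BLM26} compiler, whose analysis treats the BQC blackbox-wise. Collapsing is needed there to rewind and extract from the succinct classical commitments, and the non-black-box use arises from evaluating the hash and PRF homomorphically.

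The main obstacle is proving blindness of the claw-generation step in the standard model with only polynomially many seed qubits. The difficulty is that the same seed key is reused across arbitrarily many generated states. I would first try a hybrid that replaces the homomorphic PRF evaluation by a simulated one, using the pad's perfect hiding for each gadget. After that I would invoke PRF security once, and make sure the reduction never needs to rewind a quantum server.
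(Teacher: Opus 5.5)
\textbf{The expansion step you propose does not give a quantum-succinct protocol.} Evaluating $F_k$ coherently while $k$ is hidden under a quantum one-time pad is homomorphic evaluation. Each non-Clifford gate then leaves a pad-dependent error that the server cannot remove on its own. From one-way functions and classical messages alone, the standard remedy is a fresh quantum gadget from the client per such gate. Doing it with classical messages only is what Mahadev-style QFHE needs trapdoor claw-free functions for. Each of the $T$ output claws needs its own nonlinear PRF evaluation, controlled by its own qubit, so the number of gadgets grows with $T$. Hence $\mathrm{poly}(\lambda)$ seed qubits cannot be stretched into arbitrarily many claws. Producing more gadgets than you consume is exactly the expansion problem, and the ``recursive'' option does not say how to do it. The Yao-style alternative is unspecified. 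Its natural version hands the server a superposition over input labels, from which it can extract information about both labels of a wire, so garbling security no longer applies. The obstacle you flag (key reuse and blindness) therefore sits downstream of a more basic one: you have no working mechanism for the expansion itself.

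\textbf{The paper avoids homomorphic evaluation entirely.} The client sends the single state $\frac1{\sqrt T}\sum_t\bigl(\bigotimes_i\ket{t_i,x_{i,t_i},k_{i,t_i}}\bigr)\ket{\psi_{y_{t,0},y_{t,1}}}$ together with the classical table $c_{t,i}=f_k(t)\oplus f_{k_{i,1-t_i}}(t)$. In branch $t$ the server can compute every claw except the $t$-th in the clear, so it coherently fills in all $T$ claws. It must then delete the key registers by a Hadamard measurement, which the client checks with a parity test involving $x_{i,0}\oplus x_{i,1}$ and $k_{i,0}\oplus k_{i,1}$. This gives only $1/4+\negl(\lambda)$ security for a random index (\Cref{thm:single}, via the branch/residual decomposition). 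The paper amplifies it by gluing $\lambda$ sequential copies under fresh random permutations and invoking the QSVT-based sequential repetition theorem for quantum search games (\Cref{thm:sequential-repetition}). Your plan also misses three further points.
\begin{itemize}
\item A plain Hadamard measurement of the preimage register only gives $\ket{\pm}$ states. Their sign $d\cdot(x_0\oplus x_1)$ is hidden only under an adaptive-hardcore-bit-type property, which does not follow from hardness of recovering both preimages. The paper instead uses the Goldreich--Levin-based conversion of \cite{BK25} to BB84/OSP correlations.
\item The \cite{BLM26} compressor is stated for classical verifiers. It must be extended to verifiers whose quantum messages are fixed before the interaction (\Cref{def:qshi}).
\item The compiled-nonlocal-game argument needs blindness rather than verifiability, but it only has constant soundness. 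A second application of \Cref{thm:sequential-repetition} is therefore needed to reach negligible soundness.
\end{itemize}
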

Above, $x$ refers to the QMA statement, and $|x|$ is its bit length. In particular, $|x|\cdot \poly(\lambda)$ has no dependence on the witness length or the time required to run the original QMA verifier on $x$ and a witness state $\ket{\psi}$. Note that, because of the non-black-box use of the collapsing hash function, we do not automatically obtain a succinct argument for QMA in the quantum random oracle model.

\begin{remark}[Prepare-and-send verifier]
One potentially desirable aspect of our protocol is that it does not require the verifier to keep any quantum memory between rounds. In particular, each of the verifier's quantum messages can be prepared from its classical state immediately before sending, and no auxiliary quantum registers are kept by the verifier after the message is sent. 
\end{remark}

Our main technical ingredient is a new construction of blind delegation of quantum computation, which asks whether a resource-constrained client can delegate a quantum computation to a server in such a manner that the server learns nothing about the client's computation. One natural resource constraint is the size of the quantum circuit that the client runs during the course of the protocol. If the size is 0, we refer to this as \emph{classical} blind delegation of quantum computation, and if the size is bounded by a fixed polynomial in the security parameter, \emph{independent} of the size of the computation being delegated, we refer to this as \emph{quantum-succinct} blind delegation of quantum computation.

It was shown by \cite{Zha21} that quantum-succinct blind delegation of quantum computation exists in the quantum random oracle model. In this work, we design and prove the security of a vastly simpler protocol that only requires one-way functions, yielding the following theorem.

\begin{theorem}\label{thm:intro-blind}
    Assuming one-way functions, there exists quantum-succinct blind delegation of quantum computation. 
\end{theorem}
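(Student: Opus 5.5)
The plan is to reduce quantum-succinct blind delegation to a \emph{claw-state generation} protocol, and then plug those states into a known blind-delegation protocol that uses claw states. Concretely, I would use the classical-client blind delegation paradigm of Mahadev/Zhang-style protocols. There the client needs the server to hold many states of the form $\frac{1}{\sqrt 2}(\ket{0}\ket{x_0}+\ket{1}\ket{x_1})$, where only the client knows $(x_0,x_1)$. From these the client implements gate teleportation/quantum one-time-pad updates, e.g.\ blind $\mathsf{T}$-gate corrections via the phase $(-1)^{d\cdot(x_0\oplus x_1)}$. So the main work is a protocol that, starting from $\poly(\secp)$ quantum messages (say $\secp$ BB84 or claw states sent by the client), lets the client and server interact classically to produce $N$ claw states for arbitrary $N$. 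The client must know all the preimage pairs, and the server must learn nothing about the hidden bits beyond what is necessary. Security should rest only on one-way functions, via a PRF/PRG.

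\textbf{Step 1 (bootstrapping).} The client sends a small number of quantum states encoding a PRF key $k$ under a quantum one-time pad, or as a claw superposition over two keys $k_0,k_1$. The server then homomorphically, and coherently in superposition, evaluates $x\mapsto (\mathsf{PRF}_{k_b}(i))_{i\le N}$. Doing this classically-blindly requires quantum-one-time-pad evaluation of the PRF circuit. I would handle it by recursion, using the claw states produced so far to blindly evaluate further gates, so that each round doubles or polynomially expands the supply. Alternatively, and more simply, the client sends $\ket{0}\ket{k_0}+\ket{1}\ket{k_1}$. The server coherently computes $\ket{b}\ket{k_b}\ket{G(k_b)_i}$ for each block $i$, and measures $k_b$ out in the Hadamard basis. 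This yields $N$ claw-like states $\ket{0}\ket{y^0_i}+\ket{1}\ket{y^1_i}$, up to a phase $(-1)^{d\cdot(k_0\oplus k_1)}$ that the client can compute from $d$. Next, a per-index disentangling step, using Hadamard measurements on fresh ancilla bits, breaks the common control bit into independent claw states. The client tracks the resulting Pauli corrections classically from the reported measurement outcomes.

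\textbf{Step 2 (blindness).} I would prove that the server's view is simulatable. The pair $(k_0,k_1)$ is hidden by the one-time-pad/claw structure, and the derived $y^b_i=G(k_b)_i$ are pseudorandom. The hidden bit in each generated claw state is therefore computationally indistinguishable from uniform. Then the standard blindness of the downstream protocol, given a supply of claw states with hidden secrets, gives blindness of the full delegation. The client's quantum work is only the initial $\poly(\secp)$ state, independent of the circuit size, which gives quantum succinctness.

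\textbf{Main obstacle.} The hardest step is making the disentangling from one shared control qubit into many \emph{independent} claw states work while the hidden XOR values $y^0_i\oplus y^1_i$ stay pseudorandom and jointly hidden. In particular, the reduction to PRF security must go through even though the server holds the keys in superposition. I would try a hybrid argument: first replace the measured-out key register's phase information by a uniformly random phase, using the fact that Hadamard-basis measurement outcomes on $k_b$ are uniform. Then swap $G(k_0),G(k_1)$ for truly random strings one key at a time.
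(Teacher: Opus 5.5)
There is a genuine gap, and it sits exactly at the step you flag as the main obstacle: a single superposed control bit cannot be split into $N$ independent claw states. After your Step 1 the server holds $\ket{0}\ket{G(k_0)}\pm\ket{1}\ket{G(k_1)}$. This is \emph{one} claw state with long labels, not $N$ of them. For example, Hadamard-measuring label blocks $2,\dots,N$ just leaves $\ket{0}\ket{y^0_1}\pm\ket{1}\ket{y^1_1}$, so at most one hidden bit (one BB84 state downstream) can be extracted. The only quantum uncertainty the server faces is the bit $b$. Any fresh superposition it introduces must carry labels that, inside branch $b$, are computable from $k_b$ and the classical transcript. So a server that measures the control and key registers in the computational basis learns $k_b$, and with it both labels of every claw it would create. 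Your protocol never detects this, because the client uses $d$ only to compute a phase and performs no test. The same point breaks your Step 2 hybrid: you cannot invoke PRG/PRF security for $G(k_0)$ or $G(k_1)$ when the adversary can obtain either seed by a measurement, so ``hidden bit indistinguishable from uniform'' does not follow. Your alternative Step 1 is also unspecified. It blindly evaluates the PRF using claw states not yet produced, which is essentially Zhang's recursive approach, with its round-complexity and random-oracle issues.

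The paper's construction supplies the missing structure. The client sends a superposition over $T$ branches, indexed by $\log T$ registers $\ket{t_i,x_{i,t_i},k_{i,t_i}}$, together with one claw $\ket{\psi_{y_{t,0},y_{t,1}}}$ where $(y_{t,0},y_{t,1})=f_k(t)$. It also sends the classical table $c_{t,i}=f_k(t)\oplus f_{k_{i,1-t_i}}(t)$, so in branch $t$ the server can rebuild every claw except the $t$-th, which the client supplies. The index registers are then deleted by a Hadamard measurement that the client \emph{checks}: $d_i^{(1)}\neq0$ and a parity involving the random $x_{i,0}\oplus x_{i,1}$. Hence collapsing to a branch passes with probability only $1/T$. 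Even so, this yields only weak security: for a random target index, passing and guessing its claw succeeds with probability at most $1/4+\negl(\lambda)$. The proof splits $\ket{\xi}=(\ket{\xi}-\tfrac{\sqrt T}{2}\ket{\zeta})+\tfrac{\sqrt T}{2}\ket{\zeta}$ and applies PRF security only to the single-branch term, where the masking keys $k_{i,1-t^*_i}$ never appear in the state. The weak bound is then amplified by a QSVT-based sequential repetition theorem and by gluing $\lambda$ copies through random permutations. Finally the result goes through the BK25 transformations: claw states to OSP, then OSP to blind delegation. Your plan has no counterpart of the multi-branch structure, the verified deletion, or the amplification, so it does not establish the theorem.
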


\paragraph{Discussion.} Recall that Kilian's result used collision-resistant hash functions to compile a PCP for NP into a succinct argument for NP. Recently, a quantum analogue of Kilian's compiler was worked out by \cite{GJMZ23}, establishing that PCPs for QMA can be compiled into succinct arguments for QMA, utilizing only unstructured cryptography. In fact, they assume only the existence of pseudorandom unitaries, a ``Microcrypt'' assumption that is weaker than even one-way functions. Unfortunately, it remains a major unresolved problem whether PCPs for QMA actually exist. 

Another point worth making is that \cite{GJMZ23} and our protocol both make use of quantum communication and thus a quantum verifier. One could ask whether the verifier can be made completely classical, while remaining succinct. This is indeed known, but under public-key assumptions such as LWE. However, improving the result to classical, even \emph{non-succinct}, verification of QMA in Minicrypt would be considered a major breakthrough, so we regard the quantum communication component of our protocol as a crucial relaxation given current techniques.

Therefore, we can summarize the state of succinct arguments for QMA as follows.

\begin{itemize}
    \item \cite{GJMZ23}: Assuming the quantum PCP conjecture, there exist succinct arguments for QMA from pseudorandom unitaries. That is, assuming the quantum PCP conjecture, succinct arguments for QMA exist in Microcrypt.
    \item Our result: Succinct arguments for QMA exist in Minicrypt, without assuming the quantum PCP conjecture. In particular, succinct arguments for QMA exist assuming only collapsing hash functions.
    \item \cite{MNZ24,GTKNV25,BLM26}: Succinct arguments for QMA \emph{with classical verification} exist in Cryptomania, without assuming the quantum PCP conjecture. In particular, succinct classically-verifiable arguments for QMA exist assuming a standard public-key assumption such as LWE.
\end{itemize}

\subsection{Technical Outline}

Before we explain our approach to the problem, let us recall some necessary background information to motivate our design choices.
Our starting point is the recent work of~\cite{BLM26}, who proposed a compiler that, assuming collapsing hash functions, compresses the communication complexity of any classical-verifier interactive protocol $\Pi$ with $r(\lambda)$ rounds. For any $\varepsilon=1/\poly(\lambda)$, the compiled protocol $\tilde\Pi$ has negligible completeness loss, soundness loss $O(\varepsilon)+\negl(\lambda)$, and satisfies
\[
 \mathrm{CC}(\widetilde\Pi)\le r(\lambda)\poly(\lambda,1/\varepsilon),
 \qquad
 \text{verifier time}\le |x|\,r(\lambda)\poly(\lambda,1/\varepsilon),
\]
where $x$ is the common input and the polynomial is fixed independently of the original message lengths. Thus, for our purposes, the remaining task is to construct a QMA argument with a fixed $\poly(\lambda)$ number of rounds but otherwise arbitrary classical communication, which can be fed into the above compiler.

Prior work \cite{BLM26} constructs such a round-efficient argument using the compiled non-local games framework~\cite{KLVY23,NZ23,BK25,BKMSW25}. The approach here is to start from an information-theoretically secure \emph{two-prover} protocol and compile it into a single-prover protocol using cryptography. The cryptographic ingredient is \emph{blind delegation}, which is an interactive protocol between a server and a client, where the client holds a private classical input $x$, the server has a quantum input $\rho$, and they both have the public description of a quantum circuit $Q$. In the ideal honest execution, the client obtains Pauli keys $(r,s)$ and the server obtains $\sigma$ such that
\[
 \mathsf{Z}^s\mathsf{X}^r\sigma\mathsf{X}^r\mathsf{Z}^s=Q(x,\rho).
\]
Correctness allows negligible error and includes arbitrary reference systems, as specified in \Cref{def:blind}. We require that, for every QPT server $\mathcal A$ and every pair of equal-length inputs $(x_0, x_1)$, the views of the server are computationally indistinguishable:
\[
\mathsf{View}_{\mathcal A}(\rho,x_0)\approx_c\mathsf{View}_{\mathcal A}(\rho,x_1).
\]
\cite{BLM26} shows how to combine a blind delegation protocol from \cite{BK25,BKMSW25} with a two-prover game from \cite{MNZ24} to obtain the desired round-efficient argument. These protocols however require the existence of public-key assumptions such as oblivious state preparation or trapdoor claw-free functions. Constructing blind delegation from collapsing hashes alone appears unlikely (or at least very difficult given known techniques), since blind delegation implies (classical-communication) key agreement \cite{BK25}.

\paragraph{Idea I: Relaxing the Model.}
Our first simple, but crucial, observation is that the client does not need to be completely classical in the above interaction. For instance, it could be acceptable to construct a blind delegation protocol where the client performs a \emph{fixed polynomial amount} of quantum operations, so long as we can still apply the aforementioned communication-compression compiler to the (potentially long) classical messages. Henceforth, we refer to such a protocol as \emph{quantum succinct}. With this observation in mind, let us try to simplify the problem even further: In \cite{BK25} it is shown that a blind delegation protocol can be generically derived from \emph{claw-state generation}, an interactive protocol where at the end, the server obtains multiple claw states
\[
\frac{1}{\sqrt{2}}(\ket{0,x_0}+\ket{1,x_1})
\]
and the client obtains a classical description of their labels $(x_0, x_1)$. The only security guarantee that we require is that the server should not be able to guess \emph{both} $x_0$ and $x_1$. Thus, our problem further reduces to designing a round-efficient and quantum-succinct claw-state generation protocol: Allowing the client to use $\poly(\lambda,\log T)$ quantum gates and a (possibly large) amount of classical communication, we want an interactive protocol where the server obtains $T$ such claw states. Climbing up the chain of implications explained above, this will suffice for our main result.

In fact, a very similar problem was considered by Zhang~\cite{Zha21}, precisely motivated by the construction of a blind delegation protocol with a quantum-succinct client. He shows how to \emph{bootstrap} a small initial quantum communication into an arbitrary polynomial number of claw states, using classical communication only. However his approach has two limitations: First, the round complexity of his protocol is polynomial in $T$, rendering it unusable for us since we require a round-efficient protocol. Second, the security is proved in the quantum random-oracle model. This is also problematic, since the communication-compression compiler makes non-black-box use of the underlying protocol (in this case, it would require the circuit description of the underlying hash function). This would render the security claim of the underlying protocol vacuous.

Instead, we propose a new approach to quantum-succinct claw-state generation, which relies only on (quantum secure) one-way functions and has the added benefit of being dramatically simpler than Zhang’s protocol. Next, we give a more detailed overview of our protocol.

\paragraph{Idea II: Quantum-Succinct Claw-Generation from One-Way Functions.}
Assume that $T$ is a power of $2$ and set $n=\lambda$. Prior to the interaction, the client samples several PRF keys denoted by $k\in\{0,1\}^{\lambda}$ and $k_{i,b}\in\{0,1\}^{\lambda}$, along with random strings $x_{i,b}\in\{0,1\}^n$, for $i\in[\log T]$ and $b\in\{0,1\}$. For any index $t \in [T] \cong\{0,1\}^{\log T}$ we will think of $(y_{t,0},y_{t,1})=f_k(t)$ as the strings that determine the $t$-th output claw. The client prepares and sends the quantum state 
\begin{equation}\label{eq:sent}
 \frac1{\sqrt T}\sum_{t\in\{0,1\}^{\log T}}
 \left(\bigotimes_{i\in[\log T]}\ket{t_i,x_{i,t_i},k_{i,t_i}}\right)
 \frac{\ket{0,y_{t,0}}+\ket{1,y_{t,1}}}{\sqrt2}
\end{equation}
to the server, which can be prepared efficiently (that is, in time $\poly(\lambda, \log T)$) with the knowledge of the client's randomness. In addition the client also sends the large, but classical, table
\begin{equation}\label{eq:table}
 c_{t,i}=f_k(t)\oplus f_{k_{i,1-t_i}}(t)
 \qquad \text{ for } t\in\{0,1\}^{\log T},\ i\in[\log T].
\end{equation}
To get some intuition, it is useful to think about each of the $T$ states in the superposition of \Cref{eq:sent}: In the ``$t$-subspace'', the server can compute $(y_{t',0},y_{t',1})$ for every $t'$ \emph{except for $t$}. Indeed, for $t'\ne t$, choose any $i$ with $t'_i\ne t_i$. Then
$c_{t',i}\oplus f_{k_{i,t_i}}(t')=f_k(t') = (y_{t',0},y_{t',1})$. We also mention that while the random $x_{i,t_i}$ are not used in this derivation, their presence will be useful for arguing security.

Coherently preparing the missing claws and placing every output at its designated index, the server can then prepare the state
\begin{equation}\label{eq:expanded}
 \left(\bigotimes_{i\in[\log T]}
 \frac{\ket{0,x_{i,0},k_{i,0}}+\ket{1,x_{i,1},k_{i,1}}}{\sqrt2}\right)
 \otimes\bigotimes_{t\in\{0,1\}^{\log T}}\frac{\ket{0,y_{t,0}}+\ket{1,y_{t,1}}}{\sqrt2}.
\end{equation}
To complete the protocol, the server measures the first registers in the Hadamard basis and returns $d=(d_1,\ldots,d_{\log T})$, parsed as $d_i=(d_i^{(0)},d_i^{(1)},d_i^{(2)})\in\{0,1\}\times\{0,1\}^n\times\{0,1\}^{\lambda}$. The client accepts exactly when, for every $i$,
\begin{equation}\label{eq:checks}
 d_i^{(1)}\ne0,
 \qquad
 \langle d_i,(1,x_{i,0}\oplus x_{i,1},k_{i,0}\oplus k_{i,1})\rangle=0.
\end{equation}
Honest Hadamard measurements satisfy the parity constraint, and $d_i^{(1)}=0$ occurs with negligible probability. The honest server thereby deletes the first registers and retains the output claw states. Moreover, notice that the round complexity of this protocol is constant, and the client's quantum operations are bounded by $\poly(\log T,n,\lambda) = \poly(\lambda)$, whereas the client's classical runtime and the server's total runtime are $\poly(T,\lambda)$.


For security, we will only be able to prove that, for a \emph{random} choice of index $t^*\gets[T]$ sampled at the conclusion of the protocol, the adversary cannot both pass \Cref{eq:checks} and output $f_k(t^*) = (y_{t^*,0},y_{t^*,1})$ with better than \emph{constant} probability. We refer the reader to the technical sections for a precise analysis, and here we just say that the proof will boil down to splitting the state of the adversary into two components:
\begin{itemize}
    \item The first component will be bounded by the success probability of an experiment where the adversary is given \emph{only} the part of the state indexed by $t^*$, instead of the superposition in \Cref{eq:sent}, which allows us to appeal to the security of the PRF and replace the table entries $c_{t^*,i}$ by uniform strings and $f_k(t^*)$ by a uniform pair. In this hybrid experiment, the success probability of outputting $f_k(t^*)$ can therefore be bounded by negligible. 
    \item The second component is a residual ``unbalanced'' state derived by subtracting off the (appropriately weighted) $t^*$ branch, and whose averaged squared norm after the deletion check can only be at most $\frac14$: each individual branch passes the checks with probability at most $\frac1T$, and the imbalance in the amplitudes of the branches can then be used to derive the $\frac{1}{4}$ bound.
\end{itemize}

Overall, we combine the two bounds using standard inequalities, and obtain a total bound on passing \Cref{eq:checks} and guessing the randomly selected output pair by $\frac14$ (plus an irrelevant negligible term). The next goal is therefore to \emph{amplify} the hardness of these output claw states.

\paragraph{Hardness Amplification.}
To amplify, we use an idea from Zhang \cite{Zha21}, which consists in running $\lambda$ independent copies of the expansion protocol sequentially and then combining $\lambda$ weakly-secure claw states into one strongly-secure claw state. Only after all copies terminate, the client chooses independent uniform permutations $\pi_i\in S_T$ for each repetition $i \in [\lambda]$ and sends them to the server. 
For an output index $t$, the server combines the $\lambda$ claws indexed by $\{\pi_i(t)\}_{i \in [\lambda]}$ by measuring the XOR of the first claw's leading qubit with each other leading qubit, obtaining the outcomes $e_{2,t},\dots,e_{\lambda,t}$. This effectively \emph{glues} together independent claw states into a single claw $\frac{\ket{0,Y_{t,0}}+\ket{1,Y_{t,1}}}{\sqrt2}$, where
\begin{align*}
 Y_{t,0}&=y^{(1)}_{\pi_1(t),0}\mathbin\Vert
 y^{(2)}_{\pi_2(t),e_{2,t}}\mathbin\Vert\cdots\mathbin\Vert
 y^{(\lambda)}_{\pi_\lambda(t),e_{\lambda,t}},\\
 Y_{t,1}&=y^{(1)}_{\pi_1(t),1}\mathbin\Vert
 y^{(2)}_{\pi_2(t),1\oplus e_{2,t}}\mathbin\Vert\cdots\mathbin\Vert
 y^{(\lambda)}_{\pi_\lambda(t),1\oplus e_{\lambda,t}}.
\end{align*}
The important observation is that guessing $Y_{t,0},Y_{t,1}$ implies also guessing \emph{all} $\lambda$ constituents simultaneously. We can therefore hope to conclude that, if each individual claw is $\frac14$ secure, then the probability of guessing the combined claw state must be bounded by $4^{-\lambda}+\negl(\lambda)$.

\paragraph{Idea III: Sequential Repetition for Quantum Search Games.}
To make this intuition formal, we prove a general \emph{sequential repetition theorem for quantum search games}: A quantum search game is an interactive protocol between a prover and a verifier, where the goal of the prover is to guess some secret $x$, and the verifier can either accept or reject. In $\ell$ sequential copies, the verifier uses independent fresh randomness for each, the adversary retains arbitrary quantum memory, and it may postpone \emph{all} predictions until every interaction has finished. We show that the combined success probability degrades exponentially with $\ell$. As usual in this context, the difficulty is that conditioning on success in other copies can disturb the quantum memory, and the verifier's secret target is unavailable to a reduction.

For any round $j$, we denote by $q_j$ the maximum probability, over arbitrary starting states just before copy $j$, of guessing all targets from $j$ through $\ell$ using the remaining strategy. Our goal will be to prove that $q_j\lesssim\varepsilon q_{j+1}$ where $\varepsilon$ is the bound on the single search game, with a reduction to the bound on the single-copy search game. We can imagine giving the reduction the best-possible quantum state as non-uniform advice, then let the reduction interact with the challenger for the $j$-th copy and simulate the rest of the interactions locally. If $q_{j+1}$ is sufficiently close to one, simply running that suffix already contradicts the one-copy bound. Otherwise, we need to amplify the successful part of the suffix \emph{coherently}. We do this using the quantum singular value transformation (QSVT)~\cite{GSLW19}, which is a powerful \emph{quantum rewinding} technique. In particular, rewinding enables us to effectively condition on the adversary successfully answering targets $j+1$ through $\ell$ (note that the reduction knows what these targets are, the only missing information is target $j$), and only then output its guess for target $j$. This increases its overall probability of success by a factor of roughly $1/q_{j+1}$, so if we had assumed for contradiction that $q_j > \varepsilon q_{j+1}$, then our reduction outputs the $j$-th target with probability $> \varepsilon$, a contradiction. We refer the reader to the technical sections for more details.

\paragraph{Putting Things Together.}
Taking a step back, we have therefore obtained, from quantum-secure one-way functions, a $\poly(\lambda)$-round claw-state generation protocol for any polynomial $T$: At the end of the interaction, the honest server holds $\bigotimes_t\frac{\ket{0,Y_{t,0}}+\ket{1,Y_{t,1}}}{\sqrt2}$, and outputting any classical pair is computationally hard. Quantum communication is only client-to-server, with $\poly(\lambda,\log T)$ preparation cost and circuits fixed before interaction. The \cite{BK25} transformation converts these correlations into a blind delegation protocol with the same efficiency properties, which supplies the missing ingredient for the communication-compression compiler of \cite{BLM26}. Combining all the steps, we obtain a succinct argument for QMA from collapsing hash functions. 

\subsection{Concurrent work}

In a concurrent and independent work, \cite{cryptoeprint:2026/2099} also construct succinct arguments for QMA in Minicrypt via a completely different approach. In particular, they compile the quantum IOP of \cite{sun_et_al:LIPIcs.CCC.2026.4} into a succinct argument for QMA that is proven unconditionally secure in the quantum random oracle model. They do not claim a construction in the plain model.

\subsection{AI usage}
The overall architecture and general proof strategy were developed without AI assistance. ChatGPT was used to refine the protocol, assist with technical proofs, revise the exposition, and check references. The authors take full responsibility for the content.

\section{Preliminaries}
Let $\lambda$ be the security parameter.
We write $[n] = \{1, \dots, n\}$ and use logarithms to base $2$.
For a finite set $S$, we write $s\sim S$ when $s$ is sampled uniformly from $S$.
All inner products between bit strings are over $\mathbb F_2$.

All registers are finite-dimensional Hilbert spaces.
We consider security against non-uniform QPT adversaries that may initialize a polynomial-size register to an arbitrary non-uniform state independent of fresh honest randomness. All cryptographic assumptions and computational indistinguishability statements quantify over adversaries with such auxiliary input; negligible terms may depend on the adversary. Instance and circuit sizes are polynomially bounded in the security parameter, while the polynomials in succinctness bounds are fixed independently of the original verification time and witness size.

We use sans-serif letters such as $\mathsf{A},\mathsf{B},\mathsf{R}$ for registers, $\mathcal H_{\mathsf{A}}$ for the Hilbert space of register $\mathsf{A}$, and $I_{\mathsf{A}}$ for its identity operator. Register labels on states and operators indicate where they act; omitted registers are left unchanged. Braces with subscripts denote indexed lists. We write $\Vert$ for concatenation of bit strings, $\|\cdot\|$ for vector or operator norm, and
\[
 \|A\|_1:=\operatorname{Tr}\sqrt{A^\dagger A}
\]
for the trace norm. The trace distance of states $\rho,\sigma$ is $\tfrac12\|\rho-\sigma\|_1$. A POVM is a family of positive operators summing to the identity. We write $\approx_c$ for computational indistinguishability and $\negl(\lambda)$ for a function smaller than every inverse polynomial.

We use compressing collapsing hash functions in the sense of~\cite{Unr16}: After a hash is evaluated coherently and its output is measured, additionally measuring the input is computationally undetectable, even given the remaining workspace. We use this property through the compiler of~\cite{BLM26}. All one-way functions and pseudorandom functions are assumed secure against the quantum adversaries specified above.

\subsection{Succinct Arguments for QMA}

Fix a QMA promise problem $L=(L_{\mathrm{yes}},L_{\mathrm{no}})$, with a polynomial-time uniform quantum verifier $V=\{V_x\}$. For $x\in L_{\mathrm{yes}}$, there exists a polynomial-size quantum witness $\rho$ such that $V_x(\rho)$ accepts with probability at least $2/3$; for $x\in L_{\mathrm{no}}$, no witness causes $V_x$ to accept with probability greater than $1/3$. We recall the definition of succinct arguments for QMA below~\cite{BKL+22}.

\begin{definition}[Succinct Argument for QMA]
A succinct argument for the QMA promise problem $L$ is an interactive protocol $\Pi$ between a prover and a verifier where the common input is $(1^\lambda, x)$ and the prover additionally receives a quantum state $\rho$. The protocol satisfies the following properties.
\begin{itemize}
    \item Correctness: For $x \in L_{\mathrm{yes}}$, there exists a QPT prover that, on input a valid witness $\rho$, causes the verifier to accept with probability at least $\frac{2}{3}$.
    \item Soundness: For $x \in L_{\mathrm{no}}$ and all QPT provers on input an arbitrary state $\rho$, the probability that the verifier accepts is bounded by $\frac{1}{3}$.
    \item Succinctness: The communication complexity of the protocol is bounded by $\poly(\lambda)$ and the verifier's complexity is bounded by $|x|\cdot\poly(\lambda)$.
\end{itemize}
\end{definition}

\subsection{Claw States}
We refer to states of the form
\[
\ket{\psi_{x_0,x_1}}:=\frac{\ket{0,x_0}+\ket{1,x_1}}{\sqrt2}
\]
for $x_0,x_1\in\{0,1\}^n$, as \emph{claw states}. Let $\rho_{x_0,x_1} := \ketbra{\psi_{x_0,x_1}}{\psi_{x_0,x_1}}$, it is easy to see that $\rho_{x_0,x_1}$ can be prepared by a polynomial-size quantum circuit, given $(x_0,x_1)$.
The following lemma shows that a uniformly sampled claw state is hard to guess.
\begin{lemma}\label{lem:oneway}
    Consider the following experiment:
    \begin{itemize}
        \item Sample $x_0,x_1\sim\{0,1\}^n$, then prepare the state $\rho_{x_0, x_1}$ and send it to the adversary.
        \item The adversary returns $(x_0^*, x_1^*)$.
        \item The adversary wins if $(x_0, x_1) = (x_0^*, x_1^*)$.
    \end{itemize}
    The probability that any adversary succeeds in this experiment is at most $2^{1-n}$.
\end{lemma}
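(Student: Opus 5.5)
The plan is to bound the success probability by computing the averaged state the adversary receives and using the standard fact that guessing probability is at most the largest eigenvalue of that average times the number of candidate answers. Without loss of generality the adversary performs a POVM $\{M_{a_0,a_1}\}$ on $\rho_{x_0,x_1}$, since any processing and measurement can be written this way. Its success probability is then
\[
 p=\frac{1}{2^{2n}}\sum_{x_0,x_1}\operatorname{Tr}\bigl(M_{x_0,x_1}\rho_{x_0,x_1}\bigr).
\]
I will bound each term by the operator norm $\|\rho_{x_0,x_1}\|\operatorname{Tr}(M_{x_0,x_1})$. This alone is too weak, because the trace of the identity on $n+1$ qubits is $2^{n+1}$. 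So I need to exploit the structure of the state more carefully.

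The refined step is to write $\rho_{x_0,x_1}=\ket{\psi}\bra{\psi}$ and note that $\ket{\psi_{x_0,x_1}}=\tfrac{1}{\sqrt2}(\ket{0,x_0}+\ket{1,x_1})$ lies in the span of two computational basis vectors. For an operator $0\le M\le I$, the Cauchy--Schwarz inequality $|\bra{u}M\ket{v}|^2\le \bra{u}M\ket{u}\bra{v}M\ket{v}$ gives
\[
 \bra{\psi}M\ket{\psi}\le \bra{0,x_0}M\ket{0,x_0}+\bra{1,x_1}M\ket{1,x_1}.
\]
Summing over $(x_0,x_1)$ with $M=M_{x_0,x_1}$ and regrouping the first term by $x_0$ yields
\[
 \sum_{x_0}\Bigl\langle 0,x_0\Bigr|\sum_{x_1}M_{x_0,x_1}\Bigl|0,x_0\Bigr\rangle\le \sum_{x_0}1=2^n,
\]
because $\sum_{x_1}M_{x_0,x_1}\le I$. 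The second term is handled symmetrically and is also at most $2^n$.

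Combining the two terms, the total sum is at most $2^{n+1}$. Dividing by $2^{2n}$ gives $p\le 2^{1-n}$, which is the claimed bound.

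The only delicate point is the inequality $\bra{\psi}M\ket{\psi}\le \bra{u}M\ket{u}+\bra{v}M\ket{v}$ for $\psi=(u+v)/\sqrt2$. It follows from $\tfrac12\bigl(a+b+2\sqrt{ab}\bigr)\le a+b$, where $a=\bra{u}M\ket{u}$ and $b=\bra{v}M\ket{v}$. Everything else is bookkeeping. The key idea is to regroup the POVM sum so that the constraint $\sum M\le I$ is applied to only one coordinate at a time.
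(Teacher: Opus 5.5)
Your proof is correct. However, you were wrong to discard your first idea as too weak, and that first idea is exactly the paper's proof. The paper first absorbs the adversary's auxiliary state into a POVM $N_{x_0,x_1}=(I\otimes\bra{\alpha})M_{x_0,x_1}(I\otimes\ket{\alpha})$ on the claw register. It then uses $\rho_{x_0,x_1}\preceq I$ to get $\sum_{x_0,x_1}\operatorname{Tr}(N_{x_0,x_1}\rho_{x_0,x_1})\le\sum_{x_0,x_1}\operatorname{Tr}(N_{x_0,x_1})=\operatorname{Tr}(I_{\mathsf{C}})=2^{n+1}$. Dividing by $2^{2n}$ already gives $2^{1-n}$. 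The value $\operatorname{Tr}(I)=2^{n+1}$ is not a defect; it is the same numerator $2^n+2^n$ your refined argument ends with.

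Your route is different. You split $\bra{\psi}M\ket{\psi}\le\bra{0,x_0}M\ket{0,x_0}+\bra{1,x_1}M\ket{1,x_1}$ and apply the completeness constraint separately to the marginals $\sum_{x_1}M_{x_0,x_1}$ and $\sum_{x_0}M_{x_0,x_1}$. Each of your summands is at most $\operatorname{Tr}(M_{x_0,x_1})$, since it keeps only two of its nonnegative diagonal entries. So your intermediate bound is termwise sharper but lands on the same final number.

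What your version buys is that it never uses the dimension of the register. It only uses that each string enters one branch of an equal-weight superposition, and the Cauchy--Schwarz splitting holds for arbitrary vectors $u,v$. It also lets you handle auxiliary input directly: take $u=\ket{0,x_0}\ket{\alpha}$ and $v=\ket{1,x_1}\ket{\alpha}$. This makes your ``without loss of generality'' step explicit, and it is valid because $\ket{\alpha}$ is independent of $(x_0,x_1)$. The paper's version buys brevity: one operator inequality plus the completeness relation.
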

\begin{proof}
Let $\mathsf{C}$ be the claw register, with $\mathcal H_{\mathsf{C}}\cong\mathbb C^{2^{n+1}}$, and let $\mathsf{A}$ be the adversary's auxiliary register. We may assume without loss of generality that its initial state is pure, say $\ket{\alpha}_{\mathsf{A}}$: a mixed state can be purified by enlarging $\mathsf{A}$. This state is independent of $(x_0,x_1)$.

Let $\{M_{x_0,x_1}\}_{x_0,x_1}$ be the POVM on $\mathsf{C}\mathsf{A}$ describing the adversary's guess. To obtain a measurement on the claw register alone, define
\[
 N_{x_0,x_1}
 :=(I_{\mathsf{C}}\otimes\bra{\alpha}_{\mathsf{A}})
 M_{x_0,x_1}
 (I_{\mathsf{C}}\otimes\ket{\alpha}_{\mathsf{A}}).
\]
Each $N_{x_0,x_1}$ is positive, and
$
 \sum_{x_0,x_1}N_{x_0,x_1} =I_{\mathsf{C}},
$
thus these operators form a POVM. They give exactly the same guessing probabilities because
\[
 \operatorname{Tr}\!\left(M_{x_0,x_1}
    (\rho_{x_0,x_1}\otimes\ketbra{\alpha}{\alpha})\right)
 =\operatorname{Tr}(N_{x_0,x_1}\rho_{x_0,x_1}).
\]
The adversary's success probability is therefore
\begin{align*}
 \frac1{2^{2n}}\sum_{x_0,x_1}
 \operatorname{Tr}(N_{x_0,x_1}\rho_{x_0,x_1})
 &\leq\frac1{2^{2n}}\sum_{x_0,x_1}\operatorname{Tr}(N_{x_0,x_1})\\
 &=\frac1{2^{2n}}\operatorname{Tr}\!\left(\sum_{x_0,x_1}N_{x_0,x_1}\right)\\
 &=\frac1{2^{2n}}\operatorname{Tr}(I_{\mathsf{C}})\\
 &=2^{1-n},
\end{align*}
where the inequality uses $\rho_{x_0,x_1}\preceq I_{\mathsf{C}}$.
\end{proof}

\subsection{Pseudorandom Functions}

Let $T$ be a power of $2$ and $f_k :\{0,1\}^{\log T} \to \{0,1\}^{2n}$ be a keyed function, for $k \in\{0,1\}^\lambda$. We say that $f_k$ is a pseudorandom function if its table of outputs is computationally indistinguishable from a uniformly random table. In this work we consider the special case where $T=\poly(\lambda)$ and we assume that the function satisfies the following notion of pseudorandomness, which is implied by the regular definition of a pseudorandom function.
\begin{definition}[Pseudorandomness]\label{def:prf}
    A function $f_k:\{0,1\}^{\log T} \to \{0,1\}^{2n}$ is pseudorandom if, for all $x^* \in \{0,1\}^{\log T}$, the following distributions are computationally indistinguishable
\[
\bigl(\{f_k(x)\}_{x\ne x^*},f_k(x^*)\bigr)
 \approx_c
 \bigl(\{f_k(x)\}_{x\ne x^*},y^*\bigr).
\]
    Here $k\sim\{0,1\}^\lambda$ and $y^*\sim\{0,1\}^{2n}$ are independent, and the non-target entries are listed in the same fixed order on both sides.
\end{definition}
Such pseudorandom functions can be constructed assuming the existence of (quantum-secure) one-way functions \cite{Zhandry12}.

\section{Claw-State Generation Protocol}

\subsection{Protocol Description}\label{sec:protocol}
Set $n = \lambda$ and let $T=\poly(\lambda)$ be a power of $2$. We consider the following interactive protocol between a client and a server.
\begin{itemize}
    \item[(i)] The client samples $k \sim \{0,1\}^\lambda$, then for all $i\in[\log T]$ and $b\in\{0,1\}$ it samples $x_{i,b}\sim\{0,1\}^n$ and $k_{i,b}\sim\{0,1\}^\lambda$.
    \item[(ii)] For all $t\in\{0,1\}^{\log T}$ and $i\in[\log T]$, the client sends to the server
\[
c_{t, i}:= f_k(t) \oplus f_{k_{i,1-t_i}}(t) \in\{0,1\}^{2n}.
\]
\item[(iii)] The client computes the state  
\[
\bigotimes_{i\in[\log T]}\left(\frac{\ket{0,x_{i,0}, k_{i,0}} + \ket{1,x_{i,1}, k_{i,1}}}{\sqrt{2}}\right) = \frac{1}{\sqrt{T}} \sum_{t \in\{0,1\}^{\log T}} \bigotimes_{i\in[\log T]}\ket{t_i,x_{i,t_i},k_{i,t_i}}.
\]
Let $(y_{t,0}, y_{t,1}) = f_k(t)$. The client applies the isometry
\[
\ket{t}\ket0 \mapsto \ket{t}\ket{\psi_{y_{t,0}, y_{t,1}}}
\]
to the leading qubits of the above state and a fresh output register to obtain
\[
\frac{1}{\sqrt{T}} \sum_{t \in\{0,1\}^{\log T}} \left(\bigotimes_{i\in[\log T]}\ket{t_i,x_{i,t_i},k_{i,t_i}}\right)\otimes \ket{\psi_{y_{t,0}, y_{t,1}}}.
\]
The client returns its auxiliary qubits to zero and sends the resulting state to the server.
\item[(iv)] The server maps the state received from the client to
\[
\left(\frac{1}{\sqrt{T}} \sum_{t \in\{0,1\}^{\log T}} \bigotimes_{i\in[\log T]}\ket{t_i,x_{i,t_i},k_{i,t_i}}\right) \bigotimes_{t \in\{0,1\}^{\log T}}\ket{\psi_{y_{t,0}, y_{t,1}}}.
\]
Note that this isometry can be implemented efficiently, given $\{c_{t,i}\}_{t, i}$. Indeed, for any fixed $t$ and any $t'\neq t$, the server can recover $f_{k}(t')$ by selecting an index $i$ such that $t_i' \neq t_i$, so that $k_{i, 1-t_i'} =k_{i, t_i}$, and computing $c_{t', i}\oplus f_{k_{i,t_i}}(t')= f_k(t')  = (y_{t',0}, y_{t',1})$. Then the server can compute the claw state $\ket*{\psi_{y_{t',0}, y_{t',1}}}$ by evaluating the claw-preparation circuit. Running this algorithm coherently and reordering the registers leads to the state as described above.
\item[(v)] The server measures its first registers in the Hadamard basis to obtain $d=(d_1, \dots, d_{\log T})$, where $d_i\in\{0,1\}\times \{0,1\}^{n}\times \{0,1\}^{\lambda}$, which are sent to the client.
\item[(vi)] The client parses $d_i = (d_i^{(0)}, d_i^{(1)}, d_i^{(2)})$, and for all $i\in[\log T]$ it checks that $d_i^{(1)} \neq 0$ and that
\[
\langle d_i, (1, x_{i,0}\oplus x_{i,1}, k_{i,0}\oplus k_{i,1})\rangle =0.
\]
We say that the server \emph{passes} the protocol if all checks of the client succeed. If this is the case, the client returns $\{y_{t,0},y_{t,1}\}_{t\in\{0,1\}^{\log T}}$ as its private local output.
\end{itemize}
Note that the client's quantum operations are confined in step (iii) and their runtime is bounded by some $\mathrm{poly}(\log T, n, \lambda)$. On the other hand, the client's classical runtime as well as the server's total runtime is bounded by $\mathrm{poly}(T, n, \lambda)$. 

Moreover, it is easy to see that an honest server passes the protocol except with probability at most $(\log T)2^{-n}$, since the probability that $d_i^{(1)} = 0$ is exactly $2^{-n}$.

\subsection{Analysis}

Next, we show that the output claw states are hard to guess. Assuming $n=\Omega(\lambda)$, we show the following.
\begin{theorem}\label{thm:single}
Consider the following experiment:
\begin{itemize}
\item Run the client-server protocol from \Cref{sec:protocol}.
\item After the server sends $d$, the client samples $t^*\sim\{0,1\}^{\log T}$ and sends this index and the strings
$\{y_{t,0},y_{t,1}\}_{t\ne t^*}$
to the server.
\item The server replies with $(y_0^*,y_1^*)$ and wins if
$(y_0^*,y_1^*)=(y_{t^*,0},y_{t^*,1})$
and $d$ passes the client's checks.
\end{itemize}
Assuming quantum-secure one-way functions, for every QPT server there is a negligible function $\mu$ such that its success probability is at most $1/4+\mu(\lambda)$.
\end{theorem}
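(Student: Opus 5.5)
The plan is to write the server's view as a linear combination of ``branches'' and split off the $t^*$ branch with a carefully chosen weight. Fix the table $\{c_{t,i}\}$ and the server's strategy. The server's action before sending $d$ is then a fixed unitary $U$ followed by a measurement; let $\Pi$ be the projector onto transcripts $d$ that pass the client's checks. Let $G_{t^*}$ be the projector describing the server's later guess, after receiving $t^*$ and $\{y_{t,0},y_{t,1}\}_{t\ne t^*}$, being equal to $(y_{t^*,0},y_{t^*,1})$. Write the received state as $s=\frac{1}{\sqrt T}\sum_t\phi_t$, where
\[
\phi_t=\Bigl(\bigotimes_i\ket{t_i,x_{i,t_i},k_{i,t_i}}\Bigr)\ket{\psi_{y_{t,0},y_{t,1}}}
\]
tensored with the adversary's auxiliary state. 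The success probability is $\mathbb E_{t^*}\|G_{t^*}\Pi U s\|^2$, averaged also over the client's randomness. I will decompose $s=w\,\phi_{t^*}+r_{t^*}$ with $w=\sqrt T/2$. The triangle inequality then gives
\[
\|G_{t^*}\Pi U s\|\le w\|G_{t^*}\Pi U\phi_{t^*}\|+\|\Pi U r_{t^*}\|,
\]
and the two terms are bounded separately.

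\emph{Residual term (information-theoretic).} First, for every fixed $t$, $\|\Pi U\phi_t\|^2\le 1/T$. Indeed, in branch $t$ alone the strings $x_{i,1-t_i}$ appear nowhere: they are not in the table and not in $\phi_t$. Hence, conditioned on $d_i^{(1)}\ne0$, each parity bit $\langle d_i^{(1)},x_{i,0}\oplus x_{i,1}\rangle$ is a fresh uniform bit, independent across $i$, and all $\log T$ checks pass with probability at most $2^{-\log T}$. Next, by linearity, and since $\mathbb E_{t^*}\phi_{t^*}=s/\sqrt T$, set $p=\|\Pi Us\|^2\le1$. Then
\[
\mathbb E_{t^*}\|\Pi U r_{t^*}\|^2=p-\tfrac{2w}{\sqrt T}p+w^2\,\mathbb E_{t^*}\|\Pi U\phi_{t^*}\|^2\le p(1-\tfrac{2w}{\sqrt T})+\tfrac{w^2}{T}=\tfrac14 ,
\]
using $w=\sqrt T/2$. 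This is where the ``imbalance'' appears: the cross term exactly cancels $p$. Note that $G_{t^*}$ is simply dropped here, since it is a projector.

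\emph{Branch term (computational).} I need $\|G_{t^*}\Pi U\phi_{t^*}\|^2\le\negl(\lambda)/T$ for each fixed $t^*$; after multiplying by $w^2=T/4$ this becomes negligible. Normalizing, $\sqrt T\phi_{t^*}$ is an experiment in which the server receives only the $t^*$ branch: the keys $k_{i,t^*_i}$, the strings $x_{i,t^*_i}$, the claw $\psi_{y_{t^*}}$, the table, and later the other $y$'s. It must both pass and guess $f_k(t^*)$. This event is efficiently checkable by a reduction that samples everything except the challenged PRF value. The reduction knows all $x$'s and can prepare the branch state, whose norm factor $1/T$ is accounted for exactly.

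The hybrids are as follows. First, using \Cref{def:prf} for each key $k_{i,1-t^*_i}$, which is absent from the branch, replace $f_{k_{i,1-t^*_i}}(t^*)$ by uniform strings. This makes each $c_{t^*,i}$ uniform and independent of $f_k(t^*)$. Second, using \Cref{def:prf} for $k$, with the other outputs $\{f_k(t)\}_{t\ne t^*}$ given to the distinguisher so it can build the rest of the table and the later message, replace $f_k(t^*)$ by a uniform pair $y^*$. Now $y^*$ enters only through $\rho_{y^*_0,y^*_1}$, so \Cref{lem:oneway} bounds the winning probability by $2^{1-n}=\negl(\lambda)$. Combining the two bounds via $(a+b)^2\le(1+\gamma)a^2+(1+1/\gamma)b^2$, with $\gamma$ inverse-polynomially small or chosen per $\lambda$, yields $\frac14+\mu(\lambda)$.

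The step I expect to be most delicate is the branch term. The worry is making sure the unnormalized branch vector really corresponds to an efficiently simulable single-branch experiment whose success event the PRF distinguisher can test; this includes the pass check, which needs both $x_{i,b}$'s but no hidden keys. A second point to check is that the non-uniform, amplitude-weighted interpretation survives averaging over $t^*$ and the client's randomness. The residual computation is purely algebraic once the $1/T$ per-branch pass bound is established.
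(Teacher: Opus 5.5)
Your architecture matches the paper's. You use the same split with weight $\sqrt T/2$, the same cancellation of the cross term after averaging over $t^*$, and the same $1/T$ pass bound per branch, so your residual bound of $1/4$ is correct. The gap is in the branch term, at the exact point you flagged, and your resolution of it is false. The client's check is
\[
d_i^{(0)}\oplus\langle d_i^{(1)},x_{i,0}\oplus x_{i,1}\rangle\oplus\langle d_i^{(2)},k_{i,0}\oplus k_{i,1}\rangle=0,
\]
so it \emph{does} depend on the hidden key $k_{i,1-t^*_i}$; your parity bit omits the key component. In the $i$-th replacement of your first hybrid, the distinguisher holds only the PRF challenge table for $k_{i,1-t^*_i}$. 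It therefore cannot tell whether $d$ passes, and cannot test the event ``pass and guess correctly'' whose probability you want to carry across hybrids. As written, that hybrid step does not go through. The residual bound is unaffected: under your conditioning the key term is a fixed bit, and $x_{i,1-t_i}$ still makes each check a fair coin.

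The missing step, which the paper carries out, is to eliminate the parity checks \emph{before} invoking pseudorandomness. Condition on all client randomness except $\{x_{i,1-t^*_i}\}_i$. Then $U\phi_{t^*}$, the later client message, and $G_{t^*}$ are fixed. Moreover $\Pi$ is diagonal in $d$ and $G_{t^*}$ acts trivially on the register holding $d$, so cross terms vanish, and your fair-coin argument gives
\[
\mathbb{E}\,\|G_{t^*}\Pi U\phi_{t^*}\|^2=\frac1T\Pr\bigl[\forall i:\ d_i^{(1)}\neq0\ \wedge\ \text{correct guess}\bigr]
\]
in the single-branch experiment. This event never refers to the keys $k_{i,1-t^*_i}$, so your two hybrids and \Cref{lem:oneway} now apply. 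This is also where the factor $1/T$ really comes from: with your definition, $\phi_{t^*}$ is already a unit vector, so there is no normalization factor to account for.

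Alternatively, since $\Pi$ and $G_{t^*}$ are commuting projectors, you may simply drop $\Pi$. You then lose a factor $T=\mathrm{poly}(\lambda)$, which is still harmless.

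Finally, in the combination step an inverse-polynomial $\gamma$ only yields $1/4+1/\mathrm{poly}(\lambda)$. Take $\gamma$ to be the square root of the negligible branch bound, or use Minkowski's inequality in $L^2$ as the paper does.
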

\begin{proof}
Let $X$ contain the client's random choices of $k$ and $\{x_{i,b},k_{i,b}\}_{i\in[\log T],\,b\in\{0,1\}}$, and let $\mathrm{Valid}_X(d)$ be its verification predicate. We use the following registers throughout this proof and the two lemmas below: $\mathsf{C}$ contains the client's quantum message, $\mathsf{A}$ contains the server's pure auxiliary input $\ket{\alpha}$, $\mathsf{D}$ contains the client's copy of $d$, $\mathsf{B}$ contains all registers retained by the server after sending $d$, and $\mathsf{L}$ contains the client's final classical message.

Let $\mathcal A:\mathsf{C}\mathsf{A}\to\mathsf{D}\mathsf{B}$ be a unitary dilation of the server's first stage. Its dependence on the classical table $\{c_{t,i}\}_{t,i}$ is implicit. The server may retain its own copy of $d$ in $\mathsf{B}$, but all its subsequent operations act trivially on $\mathsf{D}$. Define
\[
\Pi_X:=\sum_{d:\mathrm{Valid}_X(d)=1}\ketbra{d}{d}_{\mathsf{D}}\otimes I_{\mathsf{B}}.
\]
For fixed $X$, the client sends its final message by applying the isometry
\[
V_t\ket{\varphi}_{\mathsf{DB}}
=
\ket{\varphi}_{\mathsf{DB}}\otimes
\ket{t,\{y_{u,0},y_{u,1}\}_{u\ne t}}_{\mathsf{L}}.
\]
The dependence of $V_t$ on $X$ is implicit. Let $\{M_{t,y_0,y_1}\}_{y_0,y_1\in\{0,1\}^n}$ be the POVM on $\mathsf{B}\mathsf{L}$ describing the server's second message, extended by the identity on $\mathsf{D}$.

We omit register subscripts in the remaining state formulas. The subnormalized state for which the client accepts, right before the final measurement, is
\[
\ket{\xi_{X,t^*}}
:=\frac1{\sqrt T}V_{t^*}\sum_t\Pi_X\mathcal A
\left(
\left(\bigotimes_{i\in[\log T]}\ket{t_i,x_{i,t_i},k_{i,t_i}}\right)
\ket{\psi_{y_{t,0},y_{t,1}}}\ket{\alpha}
\right).
\]
Define also the possibly subnormalized state
\[
\ket{\zeta_{X,t^*}}
:=V_{t^*}\Pi_X\mathcal A
\left(
\left(\bigotimes_{i\in[\log T]}\ket{t_i^*,x_{i,t_i^*},k_{i,t_i^*}}\right)
\ket{\psi_{y_{t^*,0},y_{t^*,1}}}\ket{\alpha}
\right).
\]
Both states are on $\mathsf{D}\mathsf{B}\mathsf{L}$. We use the decomposition
\[
\ket{\xi_{X,t^*}}
=\left(\ket{\xi_{X,t^*}}-\frac{\sqrt T}{2}\ket{\zeta_{X,t^*}}\right)
+\frac{\sqrt T}{2}\ket{\zeta_{X,t^*}}.
\]
Intuitively, the second summand isolates the contribution obtained when the server is given the single claw state $\ket*{\psi_{y_{t^*,0},y_{t^*,1}}}$, which we will bound by the hardness of recovering both strings from one claw state. The residual term can be bounded using only the verification constraints. The coefficient $\sqrt T/2$ is chosen so that, after averaging over $t^*$, the cross term in the expansion of the residual squared norm cancels its $\|\ket{\xi_{X,t^*}}\|^2$ term.

By Cauchy--Schwarz and using $0\preceq M_{t^*,y_{t^*,0},y_{t^*,1}}\preceq I$, the square root of the prover's total success probability satisfies
\begin{align}\label{eq:single-success-bound}
&\sqrt{\frac1T\sum_{t^*}\E_X
\bra{\xi_{X,t^*}}M_{t^*,y_{t^*,0},y_{t^*,1}}\ket{\xi_{X,t^*}}}\\
&\quad\leq
\sqrt{\frac1T\sum_{t^*}\E_X
\left\|\ket{\xi_{X,t^*}}-\frac{\sqrt T}{2}\ket{\zeta_{X,t^*}}\right\|^2}+\frac12\sqrt{\sum_{t^*}\E_X
\bra{\zeta_{X,t^*}}M_{t^*,y_{t^*,0},y_{t^*,1}}\ket{\zeta_{X,t^*}}}.\notag
\end{align}
For fixed $X$, the definitions and the isometry property give
\[
V_t^\dagger\ket{\xi_{X,t}}
=\frac1{\sqrt T}\sum_u V_u^\dagger\ket{\zeta_{X,u}}.
\]
In particular, the left-hand side is independent of $t$. Thus
\[
\sum_t\operatorname{Re}\braket{\xi_{X,t}}{\zeta_{X,t}}
=\sqrt T\,\|\ket{\xi_{X,t^*}}\|^2
\qquad\text{for every }t^*.
\]
Expanding the squared norm on the right-hand side of \Cref{eq:single-success-bound} now gives
\begin{align*}
\frac1T\sum_t\left\|\ket{\xi_{X,t}}-\frac{\sqrt T}{2}\ket{\zeta_{X,t}}\right\|^2 &=\frac1T\sum_t\|\ket{\xi_{X,t}}\|^2
+\frac14\sum_t\|\ket{\zeta_{X,t}}\|^2
-\frac1{\sqrt T}\sum_t\operatorname{Re}\braket{\xi_{X,t}}{\zeta_{X,t}}\\
&=\frac14\sum_t\|\ket{\zeta_{X,t}}\|^2.
\end{align*}
Averaging over $X$ and applying \Cref{lem:preceding} (proven below) bounds the expression under the first square root on the right-hand side of \Cref{eq:single-success-bound} by $1/4$. For the second term, \Cref{lem:preceding2} (proven below) gives
\[
\frac14\sum_{t^*}\E_X
\bra{\zeta_{X,t^*}}M_{t^*,y_{t^*,0},y_{t^*,1}}\ket{\zeta_{X,t^*}}
\le\frac{2^{1-n}+\mu(\lambda)}4.
\]
Combining these bounds, the success probability is at most
\[
\left(\frac12+\frac12\sqrt{2^{1-n}+\mu(\lambda)}\right)^2
=\frac14+\negl(\lambda),
\]
since $n=\Omega(\lambda)$.
\end{proof}

Using the notation introduced above, we prove the two remaining technical statements in the following.
\begin{lemma}\label{lem:preceding}
\[
\sum_{t\in\{0,1\}^{\log T}}\E_X\|\ket{\zeta_{X,t}}\|^2\le1.
\]
\end{lemma}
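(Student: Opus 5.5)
The plan is to show that each branch contributes at most $1/T$, namely $\E_X\|\ket{\zeta_{X,t}}\|^2\le 1/T$ for every fixed $t\in\{0,1\}^{\log T}$. Summing over the $T$ values of $t$ then gives the lemma. Since $V_t$ is an isometry, $\|\ket{\zeta_{X,t}}\|^2=\|\Pi_X\mathcal A(\cdots)\|^2$. This is exactly the probability that the server passes the client's checks when it is handed only the single branch
\[
\left(\bigotimes_{i\in[\log T]}\ket{t_i,x_{i,t_i},k_{i,t_i}}\right)\ket{\psi_{y_{t,0},y_{t,1}}}\ket{\alpha},
\]
together with the classical table $\{c_{u,i}\}_{u,i}$, instead of the full superposition. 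The bound is information-theoretic and needs no computational assumption.

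The key observation concerns what the server sees in this experiment. Its view consists of the branch state and the table. The table $c_{u,i}=f_k(u)\oplus f_{k_{i,1-u_i}}(u)$ depends on the keys but not on any of the strings $x_{i,b}$. The branch state depends only on $x_{i,t_i}$ and $k_{i,t_i}$ (and on $k$ through the claw). Hence the $\log T$ strings $\{x_{i,1-t_i}\}_{i\in[\log T]}$ are uniform, mutually independent, and independent of the server's entire view. I will make this precise by fixing all other components of $X$ (that is, $k$, all $k_{i,b}$, and all $x_{i,t_i}$). Once these are fixed, the state $\mathcal A(\cdots)$ on $\mathsf{D}\mathsf{B}$ is a fixed vector $\ket{\phi}$, and the distribution of the measured $d$ in $\mathsf{D}$ is fixed; only the projector $\Pi_X$ still depends on the remaining variables $\{x_{i,1-t_i}\}_i$.

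Next I compute the averaged projector. For any fixed $d$ with $d_i^{(1)}\ne0$ for all $i$, the $i$-th check reads
\[
d_i^{(0)}\oplus\langle d_i^{(1)},x_{i,t_i}\rangle\oplus\langle d_i^{(2)},k_{i,0}\oplus k_{i,1}\rangle\oplus\langle d_i^{(1)},x_{i,1-t_i}\rangle=0 .
\]
This is a nontrivial affine condition on the uniform string $x_{i,1-t_i}$, so it holds with probability exactly $1/2$. The $\log T$ conditions involve independent variables, so all of them hold simultaneously with probability exactly $2^{-\log T}=1/T$. For a $d$ with some $d_i^{(1)}=0$, the check fails deterministically. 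Therefore
\[
\E_{\{x_{i,1-t_i}\}_i}\Pi_X\preceq \tfrac1T I,
\]
where the average is over $\{x_{i,1-t_i}\}_i$ with the other components of $X$ held fixed. Consequently
\[
\E\,\bra{\phi}\Pi_X\ket{\phi}\le \tfrac1T\|\ket{\phi}\|^2=\tfrac1T .
\]
Averaging over the remaining components of $X$ preserves the bound $1/T$, and summing over $t$ gives $\le 1$.

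The main point requiring care is the independence claim: I must verify that nothing the server receives, including the table and the PRF outputs $y_{t,b}=f_k(t)$, depends on $x_{i,1-t_i}$. I also need that $\ket{\alpha}$ and the dilation $\mathcal A$ do not depend on $X$ except through the classical table. Both hold by inspection of the protocol in \Cref{sec:protocol}. This independence is exactly why the strings $x_{i,b}$ were included in the protocol.
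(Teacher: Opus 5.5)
Your proposal is correct and follows the paper's proof essentially step for step. You fix $t$, condition on all client randomness except $\{x_{i,1-t_i}\}_i$ so that the output of $\mathcal A$ is a fixed unit vector, use that each check is a nonconstant affine equation in an independent uniform $x_{i,1-t_i}$ (the paper's \Cref{eq:hidden-average}) to get $\E_X\|\ket{\zeta_{X,t}}\|^2\le 1/T$, and then sum over $t$. Phrasing the averaging as the operator inequality $\E\,\Pi_X\preceq\frac1T I$, rather than expanding the state as $\sum_d\ket d\ket{\phi_{t,d}}$, is only a cosmetic difference.
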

\begin{proof}
Fix $t\in\{0,1\}^{\log T}$ and condition on all client randomness except $\{x_{i,1-t_i}\}_{i\in[\log T]}$. Write
\[
\mathcal A\left(
\left(\bigotimes_{i\in[\log T]}\ket{t_i,x_{i,t_i},k_{i,t_i}}\right)
\ket{\psi_{y_{t,0},y_{t,1}}}\ket{\alpha}
\right)
=\sum_d\ket d\ket{\phi_{t,d}}
\]
for possibly subnormalized states $\ket{\phi_{t,d}}$ on $\mathsf{B}$, with $\ket d$ on $\mathsf{D}$. These states are fixed under the conditioning: the input state contains only $\{x_{i,t_i}\}_{i\in[\log T]}$, and the classical table $\{c_{t,i}\}_{t,i}$ depends only on the keys. Since $V_t$ is an isometry,
\begin{align*}
\E_{\{x_{i,1-t_i}\}_i}\|\ket{\zeta_{X,t}}\|^2
&=\E_{\{x_{i,1-t_i}\}_i}
\left\|\Pi_X\sum_d\ket d\ket{\phi_{t,d}}\right\|^2\\
&=\sum_d\E_{\{x_{i,1-t_i}\}_i}\mathbf1_{\mathrm{Valid}_X(d)=1}\,
\|\ket{\phi_{t,d}}\|^2\\
&\le\frac1T\sum_d\|\ket{\phi_{t,d}}\|^2
=\frac1T.
\end{align*}
The second equality uses the orthogonality of the states $\ket d$, and the inequality follows from
\begin{equation}\label{eq:hidden-average}
\E_{\{x_{i,1-t_i}\}_i}\mathbf1_{\mathrm{Valid}_X(d)=1}
=\frac1T\mathbf1_{\forall i\in[\log T],\,d_i^{(1)}\ne0}.
\end{equation}
Indeed, if $d_i^{(1)}=0$ for some $i$, both sides vanish. Otherwise, each verification check is a nonconstant affine equation in the independent uniform string $x_{i,1-t_i}$, and hence holds with probability exactly $1/2$. There are $\log T$ independent checks, so their joint probability is $2^{-\log T}=1/T$.

Averaging over the remaining client randomness and summing over $t$ proves the claim.
\end{proof}

\begin{lemma}\label{lem:preceding2}
Assuming quantum-secure one-way functions, for every QPT server there is a negligible function $\mu$ such that
\[
\sum_{t\in\{0,1\}^{\log T}}\E_X
\bra{\zeta_{X,t}}M_{t,y_{t,0},y_{t,1}}\ket{\zeta_{X,t}}
\le2^{1-n}+\mu(\lambda).
\]
\end{lemma}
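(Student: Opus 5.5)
The plan is to handle each $t\in\{0,1\}^{\log T}$ separately and show that
\[
\E_X\bra{\zeta_{X,t}}M_{t,y_{t,0},y_{t,1}}\ket{\zeta_{X,t}}\le\frac{2^{1-n}+\mu_t(\lambda)}{T}
\]
with $\mu_t$ negligible. Summing over the $T=\poly(\lambda)$ values of $t$ then gives the claim with $\mu:=\frac1T\sum_t\mu_t$. The argument has two steps. First, extract the factor $1/T$ from the verification check exactly as in \Cref{lem:preceding}. Second, bound the remaining check-free guessing probability using pseudorandomness together with \Cref{lem:oneway}.

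\textbf{Step 1 (removing the check).} Fix $t$ and condition on all client randomness except $\{x_{i,1-t_i}\}_i$. As in \Cref{lem:preceding}, write $\mathcal A(\cdots)=\sum_d\ket d\ket{\phi_{t,d}}$. Neither $\ket{\phi_{t,d}}$ nor the final message $\ket{t,\{y_{u,0},y_{u,1}\}_{u\neq t}}_{\mathsf L}$ depends on the hidden strings. The operator $M_{t,y_{t,0},y_{t,1}}$ acts on $\mathsf{BL}$, and $\Pi_X$ is diagonal on $\mathsf D$, so the cross terms between different $d$ vanish. This gives
\[
\E_{\{x_{i,1-t_i}\}_i}\bra{\zeta_{X,t}}M\ket{\zeta_{X,t}}
=\sum_d\E\bigl[\mathbf1_{\mathrm{Valid}_X(d)=1}\bigr]\,\bra{d,\phi_{t,d}}V_t^\dagger MV_t\ket{d,\phi_{t,d}}.
\]
By \Cref{eq:hidden-average}, this is at most $\frac1T p_t$. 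Here $p_t$ is the probability, averaged over $X$, that the following experiment succeeds: the server receives only the $t$-branch state $\bigotimes_i\ket{t_i,x_{i,t_i},k_{i,t_i}}\ket{\psi_{y_{t,0},y_{t,1}}}$, the full table $\{c_{u,i}\}$, and then $t$ together with $\{f_k(u)\}_{u\ne t}$; it succeeds if it outputs $f_k(t)$, with no check imposed.

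\textbf{Step 2 (hybrids on $p_t$).} In this experiment the keys $k_{i,1-t_i}$ appear only through the table entries. So we use \Cref{def:prf} on each key $k_{i,1-t_i}$ with target input $x^*=t$, one hybrid per $i$, and replace $f_{k_{i,1-t_i}}(t)$ by a uniform string. The distinguisher samples $k$ and every other key and string itself, and receives $\{f_{k_{i,1-t_i}}(u)\}_{u\ne t}$, which is exactly what it needs for the entries $c_{u,i}$ with $u_i=t_i$, $u\neq t$. It also receives the challenge value at $t$. It can prepare the single-branch state in $\poly(\lambda)$ time, and since it knows $f_k(t)$ it can check whether the guess is correct. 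After these $\log T$ hybrids, every $c_{t,i}$ is uniform and independent of everything else. At that point $f_k(t)$ enters only through the claw state and the winning condition. One more application of \Cref{def:prf}, now to $k$ with $x^*=t$, replaces $(y_{t,0},y_{t,1})$ by a uniform pair. This hybrid is efficient because the distinguisher is given $\{f_k(u)\}_{u\neq t}$, so it can build all other table entries and the final message. In the last hybrid the adversary holds $\rho_{y_{t,0},y_{t,1}}$ for uniform $y_{t,0},y_{t,1}$, and everything else it sees can be simulated independently of them. \Cref{lem:oneway} therefore bounds its success by $2^{1-n}$, so $p_t\le 2^{1-n}+\mu_t$.

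\textbf{Main obstacle.} The delicate point is Step 1. The $1/T$ factor has to be extracted before any computational step, because the predicate $\mathrm{Valid}_X$ depends on the $x_{i,1-t_i}$, and they must stay information-theoretically hidden. Once the check is removed, the PRF reductions are straightforward. A second point to confirm is that the non-uniform advice $\ket\alpha$ is independent of the client randomness, which is what the reductions to \Cref{def:prf} and to \Cref{lem:oneway} require. Finally, the loss from the hybrids is a sum of $T$ negligible terms scaled by $1/T$, so it remains negligible.
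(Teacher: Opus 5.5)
Your proposal is correct and follows essentially the same route as the paper: on the single-branch state you extract the exact $1/T$ factor from the hidden strings $\{x_{i,1-t_i}\}_i$ via \Cref{eq:hidden-average}, then run $\log T$ PRF hybrids on the masking keys $k_{i,1-t_i}$ and one on $k$, and conclude with \Cref{lem:oneway}. The paper instead keeps the nonzero checks and obtains an equality with a modified experiment in which $t^*$ is sampled uniformly, which is immaterial; the one point to tighten is that $\frac1T\sum_t\mu_t\le\max_t\mu_t(\lambda)$ is negligible because the worst target $t(\lambda)$ can be hardwired as non-uniform advice in the reductions, which is how the paper makes the bound uniform over the growing set of targets.
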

\begin{proof}
Consider the following modified experiment:
\begin{itemize}
\item Sample $t^*\sim\{0,1\}^{\log T}$, then generate the client's complete setup and classical table $\{c_{t,i}\}_{t,i}$ as in \Cref{sec:protocol}, except that the client sends the state
\[
\left(\bigotimes_{i=1}^{\log T}
\ket{t_i^*,x_{i,t_i^*},k_{i,t_i^*}}\right)
\otimes\ket{\psi_{y_{t^*,0},y_{t^*,1}}}
\]
to the server.
\item After the server sends $d$, the client always sends $t^*$ and $\{y_{t,0},y_{t,1}\}_{t\ne t^*}$ to the server.
\item The server replies with $(y_0^*,y_1^*)$. It wins if $d_i^{(1)}\ne0$ for all $i\in[\log T]$ and
\[
(y_0^*,y_1^*)=(y_{t^*,0},y_{t^*,1}).
\]
\end{itemize}
We first show that the success probability in this modified experiment equals the sum in the statement. Fix $t$ and condition on all client randomness except $\{x_{i,1-t_i}\}_{i\in[\log T]}$. As in \Cref{lem:preceding}, write
\[
\mathcal A\left(
\left(\bigotimes_{i\in[\log T]}\ket{t_i,x_{i,t_i},k_{i,t_i}}\right)
\ket{\psi_{y_{t,0},y_{t,1}}}\ket{\alpha}
\right)
=
\sum_d\ket d\ket{\phi_{t,d}}.
\]
The states $\ket{\phi_{t,d}}$, the isometry $V_t$, and the operator $M_{t,y_{t,0},y_{t,1}}$ are fixed under this conditioning. Since $V_t$ and $M_{t,y_{t,0},y_{t,1}}$ act trivially on $\mathsf{D}$ and $\braket{d'}{d}=0$ for $d'\ne d$, all terms with different values of $d$ vanish. Therefore,
\begin{align*}
\bra{\zeta_{X,t}}M_{t,y_{t,0},y_{t,1}}\ket{\zeta_{X,t}}=
\sum_d\mathbf1_{\mathrm{Valid}_X(d)=1}\cdot
\bigl(\bra d\bra{\phi_{t,d}}\bigr)
V_t^\dagger M_{t,y_{t,0},y_{t,1}} V_t
\bigl(\ket d\ket{\phi_{t,d}}\bigr).
\end{align*}
Every quadratic form on the right is nonnegative and independent of the hidden strings. Applying \Cref{eq:hidden-average} therefore gives
\begin{align*}
\E_{\{x_{i,1-t_i}\}_i}
\bra{\zeta_{X,t}}M_{t,y_{t,0},y_{t,1}}\ket{\zeta_{X,t}}=
\frac1T\sum_{d:\forall i,\,d_i^{(1)}\ne0}
\bigl(\bra d\bra{\phi_{t,d}}\bigr)
V_t^\dagger M_{t,y_{t,0},y_{t,1}} V_t
\bigl(\ket d\ket{\phi_{t,d}}\bigr).
\end{align*}
Averaging over the remaining randomness and summing over $t$ yields
\begin{align*}
\sum_t\E_X\bra{\zeta_{X,t}}M_{t,y_{t,0},y_{t,1}}\ket{\zeta_{X,t}}=
\frac1T\sum_t\E_X
\sum_{d:\forall i,\,d_i^{(1)}\ne0}
\bigl(\bra d\bra{\phi_{t,d}}\bigr)
V_t^\dagger M_{t,y_{t,0},y_{t,1}} V_t
\bigl(\ket d\ket{\phi_{t,d}}\bigr).
\end{align*}
The right-hand side is exactly the success probability of the modified experiment: the factor $1/T$ samples $t^*$ uniformly, the restriction on $d$ enforces the nonzero checks, and the POVM element enforces the correct output.

It remains to bound this probability. Fix a target $t^*\in\{0,1\}^{\log T}$ and consider the following hybrids:
\begin{itemize}
\item \textbf{Hybrid 0:} This is the modified experiment above, with target $t^*$.
\item \textbf{Hybrid 1:} Replace the entries $\{c_{t^*,i}\}_{i\in[\log T]}$ by independent uniform strings.

We make these replacements one at a time. For the $i$th replacement, the masking key $k_{i,1-t_i^*}$ does not occur in the state sent to the server. A reduction uses the PRF challenge table for this key to construct the entire table, changing only its value at $t^*$. It samples the other keys itself and simulates all remaining messages. Because the parity checks have been omitted, the simulation never needs the key $k_{i,1-t_i^*}$ itself. Thus, \Cref{def:prf} makes each replacement computationally indistinguishable.
\item \textbf{Hybrid 2:} Replace $(y_{t^*,0},y_{t^*,1})=f_k(t^*)$ by an independent uniform pair.

The $t^*$-th row of the classical table is now uniform and independent of this pair. A reduction can therefore use its PRF challenge table for the master key $k$ to prepare the target claw and all remaining messages. Again, only the value at $t^*$ changes, so \Cref{def:prf} makes this hybrid computationally indistinguishable from Hybrid 1.
\end{itemize}
Note that these simulations are efficient because the PRF domain has size $T=\poly(\lambda)$.

In Hybrid 2, the target pair is uniform, and the server receives only its claw state together with data independent of that pair. Even after dropping the nonzero checks, its success probability is at most $2^{1-n}$, by \Cref{lem:oneway}. The $\log T+1$ replacements change this probability by a negligible amount. Averaging over $t^*$ proves the statement; non-uniform security makes the negligible bound uniform over the polynomially many targets.
\end{proof}

\section{Sequential Repetition}

In this section, we give a generic sequential repetition theorem for any quantum interactive search game.

\begin{definition}[Quantum interactive search game]\label{def:qsearch}
A quantum interactive search game $\Gamma$ is specified by a polynomial-size quantum
interactive circuit family $V=\{V_\secp\}_\secp$. At the end of an execution it outputs a classical string
\[
  x\in \{0,1\}^{m(\secp)}\cup\{\bot\}
\]
in a designated classical register $\mathsf{X}$. 

For any polynomial-size interactive strategy $B$ with input register $\mathsf{M}$
and output register $\mathsf{M}'$, interaction with 
$V_\secp$ defines the map
\begin{equation}\label{eq:exec-map}
  \Exec_{\Gamma,\secp}[B] :
       \mathsf{M} \longrightarrow \mathsf{M}'\mathsf{X}.
\end{equation}
\end{definition}

\begin{definition}[Success probability]
    A QPT adversary $\cB = (\ket{\psi},B,B_\out)$ against a quantum interactive search game $\Gamma$ consists of a polynomial-size advice family $\{\ket{\psi_\secp}\}_{\secp}$, a polynomial-size interactive strategy $B$, and a polynomial-size predictor circuit $B_\out$ that takes register $\mathsf{M}'$ as input and produces a classical string $y \in \{0,1\}^{m(\secp)}$ in a designated classical register $\mathsf{Y}$. The success probability $p_{\Gamma,\secp}(\cB)$ of the adversary is defined as 

\[p_{\Gamma,\secp}(\cB) \coloneqq \operatorname{Tr}\!\left[\Pi_\win (B_\out \otimes I_{\mathsf{X}})\Exec_{\Gamma,\secp}[B](\ketbra{\psi_\secp})\right],\]

where \[\Pi_\win \coloneqq \sum_{x \in \{0,1\}^{m(\secp)}}\ketbra{x}_{\mathsf{X}} \otimes \ketbra{x}_{\mathsf{Y}}.\]
The winning projector acts as the identity on unmentioned registers.
\end{definition}

\begin{definition}[Sequential success probability]
    
Given a repetition parameter $\ell = \ell(\secp)$, a QPT adversary $\cA = (\ket{\psi},A_1,\dots,A_\ell,A_\out)$ against the $\ell$-sequentially-repeated quantum interactive search game $\Gamma$ has success probability $p_{\Gamma,\secp}^\ell(\cA)$ defined as follows. Each copy uses fresh verifier randomness and workspace.
\begin{itemize}
    \item Set $\sigma_0 \coloneqq \ketbra{\psi_\secp}$ on register $\mathsf{M}_0$.
    \item For $i \in [\ell]$, set $\sigma_i \coloneqq \left(\Exec_{\Gamma,\secp}[A_i]_{\mathsf{M}_{i-1} \to \mathsf{M}_i\mathsf{X}_i} \otimes I_{\mathsf{X}_1,\dots,\mathsf{X}_{i-1}}\right)(\sigma_{i-1})$.
    \item $p_{\Gamma,\secp}^\ell(\cA) \coloneqq \operatorname{Tr}\!\left[\Pi_\win^\ell(A_\out \otimes I_{\mathsf{X}_1,\dots,\mathsf{X}_\ell})(\sigma_\ell)\right]$,
\end{itemize}
 where \[\Pi_\win^\ell \coloneqq \sum_{x_1,\dots,x_\ell\in\{0,1\}^{m(\secp)}}\ketbra{x_1,\dots,x_\ell}_{\mathsf{X}_1,\dots,\mathsf{X}_\ell} \otimes \ketbra{x_1,\dots,x_\ell}_{\mathsf{Y}_1,\dots,\mathsf{Y}_\ell}.\]

\end{definition}

\begin{theorem}\label{thm:sequential-repetition}
    For any quantum interactive search game $\Gamma$, any polynomial $\ell = \ell(\secp)\geq1$, and any $\varepsilon(\secp)\in[0,1]$, if for all QPT adversaries $\cB$, $p_{\Gamma,\secp}(\cB) \leq \varepsilon(\secp) + \negl(\secp)$, then for all QPT adversaries $\cA$ against the $\ell$-sequentially-repeated game, $p_{\Gamma,\secp}^\ell(\cA) \leq \varepsilon(\secp)^\ell + \negl(\secp)$.
\end{theorem}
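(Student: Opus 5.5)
We argue by backward induction over the copies, following the outline in Idea III. Fix a QPT adversary $\cA=(\ket{\psi},A_1,\dots,A_\ell,A_\out)$. For $j\in[\ell+1]$, let $q_j$ be the supremum, over all polynomial-size states $\tau$ on $\mathsf{M}_{j-1}$, of the probability that running $A_j,\dots,A_\ell$ on $\tau$ with fresh verifier copies and then applying $A_\out$ yields $y_i=x_i$ for all $i\ge j$. We use the convention $q_{\ell+1}=1$, and since $A_\out$ outputs all guesses at once, we marginalize it to the guesses for copies $j,\dots,\ell$.

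Since $p^\ell_{\Gamma,\secp}(\cA)\le q_1$, it suffices to prove
\[
q_j\le \varepsilon\, q_{j+1}+\negl(\secp)\qquad\text{for each } j.
\]
Unrolling this recurrence gives $q_1\le\varepsilon^\ell+\ell\cdot\negl(\secp)$, which is $\varepsilon^\ell+\negl(\secp)$ because $\ell$ is polynomial and $\varepsilon\le1$. We work with non-uniform adversaries, so the reduction may take as advice a state achieving $q_j$ up to $1/\poly$ slack. We handle the slack by proving $q_j\le\varepsilon q_{j+1}+\delta$ for every inverse polynomial $\delta$, and then taking the negligible part in the standard way.

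For copy $j$ we build a single-copy adversary $\cB$. With an (approximately) optimal advice state $\tau$ for $q_j$, it runs $A_j$ against the external challenger for copy $j$, leaving state $\rho$ on $\mathsf{M}_j$ entangled with the hidden $x_j$. It then locally simulates copies $j+1,\dots,\ell$: it samples that verifier randomness itself and purifies the simulation, so the targets $x_{j+1},\dots,x_\ell$ sit in registers that the reduction controls. Let $P$ be the projector "the guesses for copies $j+1,\dots,\ell$ are correct", which $\cB$ can implement because it knows those targets. Let $Q$ be the projector "the guess for copy $j$ equals $x_j$", which $\cB$ cannot implement.

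Write $U$ for the purified suffix and let $W=U^\dagger P U$ denote the suffix-success test pulled back to $\mathsf{M}_j$. Then
\[
q_j\approx \E\,\mathrm{Tr}\!\left[(Q\otimes I)\,U^\dagger P U\rho\right],
\]
while for every conditional state on $\mathsf{M}_j$ the suffix succeeds with probability at most $q_{j+1}$. The strategy of $\cB$ is to take $\rho$, use QSVT / fixed-point amplitude amplification (as in \cite{GSLW19}) with the reflections about $P$ and about the initial suffix state to prepare, with high probability, the post-selected state $P U\rho U^\dagger P/\mathrm{Tr}(\cdot)$, and then read off the copy-$j$ guess. The intended effect is to boost the winning probability for copy $j$ by a factor of roughly $1/q_{j+1}$, which gives $q_j/q_{j+1}$. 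If $q_j>\varepsilon q_{j+1}+\delta$, this contradicts the assumption on $\Gamma$.

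The main obstacle is that the suffix acceptance probability $s=\mathrm{Tr}(W\rho)$ is not a fixed known number: it varies with the unknown $x_j$ and with the eigenbranch of $W$. Amplification therefore has to be uniform over eigenvalues. The first thing I would try is a Jordan-lemma decomposition for the pair (initial-state projector, $P$) and a QSVT polynomial that maps every singular value at least $\sqrt{\gamma}$ to nearly $1$, with degree $O(\gamma^{-1/2}\log(1/\delta))$. This succeeds on the components where success is non-negligible, and components below $\gamma$ contribute at most $\gamma$.

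We also need to argue that the amplified state assigns guess $x_j$ with weight at least $\mathrm{Tr}(QW\rho)/q_{j+1}$. For this I would use that $Q$ commutes with the amplification: $Q$ acts only on the copy-$j$ guess and on $\mathsf{X}_j$, which the suffix never touches, so we may condition on $x_j$ classically. The squared weight from each Jordan block then rescales by $1/s_{\text{block}}$, and $s_{\text{block}}\le q_{j+1}$ holds because any normalized post-measurement state is a valid starting state in the definition of $q_{j+1}$. Finally, choosing $\gamma,\delta$ inverse polynomial keeps $\cB$ QPT and yields $q_j\le \varepsilon q_{j+1}+O(\gamma+\delta)+\negl$, which is sufficient.
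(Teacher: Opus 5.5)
\paragraph{Gap: the amplification step.} Your backward recursion $q_j\le\varepsilon q_{j+1}+\negl(\secp)$ is the paper's, as is the reduction that runs $A_j$ against the external challenger, simulates and purifies the suffix, and amplifies suffix success. The amplification step as you set it up does not give the bound you need. You use a thresholding (fixed-point) QSVT polynomial $f$ that sends every singular value above $\sqrt{\gamma}$ to about $1$. You then justify the $1/q_{j+1}$ boost block by block, claiming that $Q$ commutes with the amplification. It does not. $Q$ compares $\mathsf{Y}_j$ with $\mathsf{X}_j$, and $\mathsf{Y}_j$ is written by $A_\out$ \emph{inside} $U$. So $Q$ commutes with $P$, but not with $U^\dagger$, nor with the reflection about the image of $U\Pi_\inp$ that the QSVT sequence uses. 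Conditioning on $x_j=x$ does not help, because the projector $Q_x$ onto $y_j=x$ still acts on the output of $U$. (Your formula $\operatorname{Tr}[(Q\otimes I)U^\dagger PU\rho]$ puts $Q$ on the wrong side; it should be $\operatorname{Tr}[QPU\rho U^\dagger]$.)

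Write $K=PU\Pi_\inp=\sum_k\sigma_k\ketbra{w_k}{v_k}$ and $\ket{\phi}=\sum_k a_k\ket{v_k}$. Thresholding outputs $\sum_k f(\sigma_k)a_k\ket{w_k}$. Its copy-$j$ weight $\|Q_x\sum_k f(\sigma_k)a_k\ket{w_k}\|^2$ contains cross terms $\bra{w_k}Q_x\ket{w_{k'}}$, which a non-linear $f$ reweights differently from $\|Q_xK\ket{\phi}\|^2$. Here is a concrete failure:
\begin{itemize}
\item $x_j$ sits in the relative phase of a memory qubit $(\ket{0}+(-1)^{x_j}\ket{1})/\sqrt2$.
\item The suffix succeeds with amplitude $\sqrt{s_0}$ or $\sqrt{s_1}$ on the two branches, with $s_0>s_1$ both above threshold, leaving the qubit and the rest of the state otherwise identical.
\item $A_\out$ outputs the complement of a Hadamard-basis measurement of the qubit.
\end{itemize}
The original weight is then $(\sqrt{s_0}-\sqrt{s_1})^2/4>0$, but after thresholding both singular values to $1$ the weight is exactly $0$. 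So your claimed inequality, weight at least $\operatorname{Tr}(QW\rho)/q_{j+1}$, is false in general, and the reduction can lose all its advantage.

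\paragraph{The missing idea: amplify linearly, do not threshold.} The paper uses the uniform singular value amplification of \cite{GSLW19}. This gives $L\approx\gamma K$ in operator norm, with $\gamma=(1-\delta)/\sqrt{c}$ and $c$ a discretized upper approximation of $q_{j+1}$ supplied as non-uniform advice. It is legitimate exactly because $\|K\|^2=q_{j+1}$, the maximum over \emph{all} starting states. Since $L$ is a scalar multiple of $K$ up to error $\zeta$, the copy-$j$ projector can be applied afterwards with no commutation assumption:
\[
\|\Pi(L\otimes I)\ket{\phi'}\|^2\ge\frac{(1-\delta)^2}{c}\|\Pi(K\otimes I)\ket{\phi'}\|^2-\zeta=\frac{(1-\delta)^2}{c}\,q_j-\zeta .
\]
This needs $\gamma>1$ with slack, so $q_{j+1}$ must be bounded away from $1$. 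The complementary case $q_{j+1}\ge1-\eta/16$ is handled by running the suffix directly, with no amplification. Your approach has the appeal of not needing to know $q_{j+1}$, but a non-linear map of the singular values is precisely what breaks the argument.
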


The main quantum information tool that we use to prove \cref{thm:sequential-repetition} is the following uniform singular value amplification result from \cite{GSLW19}.

\begin{lemma}[{\cite[Theorem~30 and the paragraph immediately following it]{GSLW19}}]\label{lemma:GSLW}
    Let $U$ be a unitary and $\Pi_\inp,\Pi_\out$ be two projectors such that $U,U^\dagger$ and coherent tests of the projectors are implementable by quantum circuits of size at most $s$. Let $\gamma > 1, \delta,\zeta \in (0,1/2)$, define $K \coloneqq \Pi_\out U \Pi_\inp$, and suppose that $\|K\| \leq \frac{1-\delta}{\gamma}$. Then there exists a unitary $W$, using one additional qubit $\mathsf{C}$ and implementable by a circuit of size at most $\poly(s,\gamma,1/\delta,\log(1/\zeta))$, such that \[\|\left(\bra{0}_{\mathsf{C}} \otimes \Pi_\out\right)W\left(\ket{0}_{\mathsf{C}} \otimes \Pi_\inp\right) - \gamma K\| < \zeta.\]
\end{lemma}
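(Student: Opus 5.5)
The plan is to derive the statement from the quantum singular value transformation framework of \cite{GSLW19}, using a bounded odd polynomial that approximates the linear map $x\mapsto\gamma x$ on the range where the singular values of $K$ lie.

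\textbf{Step 1 (QSVT as a black box).} View $U$ as a projected unitary encoding of $K=\Pi_\out U\Pi_\inp$, and write the singular value decomposition $K=\sum_j\sigma_j\ket{w_j}\bra{v_j}$. The QSVT theorem states the following. Let $P\in\mathbb R[x]$ be an odd polynomial of degree $D$ with $|P(x)|\le1$ on $[-1,1]$. Then there exist phases $\phi_1,\dots,\phi_D$ such that an alternating product $W$ has the property $(\bra0_{\mathsf C}\otimes\Pi_\out)W(\ket0_{\mathsf C}\otimes\Pi_\inp)=P^{(SV)}(K):=\sum_j P(\sigma_j)\ket{w_j}\bra{v_j}$.
\begin{itemize}
\item The product alternates $U$ and $U^\dagger$ with the controlled reflections $e^{i\phi(2\Pi-I)}$ for $\Pi\in\{\Pi_\inp,\Pi_\out\}$.
\item Each reflection is implemented with the single ancilla $\mathsf C$ via the coherent projector tests.
\item To get a real polynomial rather than a complex one, take the average of the $\phi$ and $-\phi$ sequences. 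This is controlled by the same qubit $\mathsf C$ with a Hadamard on each side.
\end{itemize}
The circuit size of $W$ is $O(D\cdot s)$. The phases are computable classically, and they need only be known to $\poly(D,\log(1/\zeta))$ bits, because any phase error contributes an error that can be absorbed into $\zeta$.

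\textbf{Step 2 (the polynomial).} Construct an odd $P$ of degree $D=O\big(\tfrac{\gamma}{\delta}\log\tfrac{\gamma}{\zeta}\big)$ satisfying two conditions:
\begin{itemize}
\item $|P(x)-\gamma x|\le\zeta/2$ for all $x\in[0,(1-\delta)/\gamma]$;
\item $|P(x)|\le1$ on $[-1,1]$.
\end{itemize}
The construction I would try first proceeds as follows.
\begin{enumerate}
\item Approximate the rectangle function $\mathrm{rect}$, equal to $1$ on $[-(1-\delta/2)/\gamma,(1-\delta/2)/\gamma]$ and $0$ outside a slightly larger interval, by an even polynomial. This uses the standard error-function-based approximation of the sign function: with $\kappa\approx\delta/\gamma$ it has degree $O(\tfrac1\kappa\log\tfrac1\zeta)$ and accuracy $\zeta'$.
\item Multiply this by $\gamma x$, rescaled by $(1+\zeta')^{-1}$ or a similar factor. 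Wherever the rectangle approximation is not small, $|\gamma x|\le 1-\delta/2+O(\delta)$ stays at most $1$ for the right choice of constants. Outside the window the product is at most $\gamma\zeta'$, so taking $\zeta'=\zeta/(4\gamma)$ keeps it bounded.
\item Inside $[0,(1-\delta)/\gamma]$, the deviation from $\gamma x$ is $O(\zeta')+O(\zeta')$.
\end{enumerate}

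\textbf{Step 3 (conclusion).} Since every $\sigma_j\le\|K\|\le(1-\delta)/\gamma$, we get
\[\|P^{(SV)}(K)-\gamma K\|=\max_j|P(\sigma_j)-\gamma\sigma_j|\le\zeta/2<\zeta.\]
Combined with the $O(\zeta/2)$ slack from finite-precision phases, this gives the claimed inequality. The size bound $\poly(s,\gamma,1/\delta,\log(1/\zeta))$ then follows from $D\cdot s$ plus the cost of the reflections.

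\textbf{Main obstacle.} I expect Step 2 to be the hard part: building a polynomial that is uniformly bounded by $1$ on all of $[-1,1]$ while being $\zeta$-close to $\gamma x$ up to $(1-\delta)/\gamma$, with degree only polynomial in $\gamma/\delta$ and logarithmic in $1/\zeta$. The delicate points are the constant bookkeeping near the transition region, where $\gamma x$ approaches $1$, and keeping the polynomial odd. I would handle these by shrinking the window by $\delta/2$ and renormalizing by a factor $1-\delta/4$ before absorbing the remainder into $\zeta$. The existence of the QSVT phases (Step 1) I would cite directly from \cite{GSLW19} rather than reprove.
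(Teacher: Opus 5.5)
Your proposal is correct and follows the same route as the source the paper relies on. The paper gives no proof of its own and simply invokes \cite[Theorem~30]{GSLW19}, which is established by exactly this argument: single-ancilla real-polynomial QSVT applied to an odd polynomial $\gamma x\,R(x)$, with $R$ an even rectangle-function approximation, of degree $O(\frac{\gamma}{\delta}\log\frac{\gamma}{\zeta})$.

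One caution: drop the renormalization by $1-\delta/4$ suggested in your last paragraph. It introduces an error of order $\delta$ near $x=(1-\delta)/\gamma$, which cannot be absorbed into $\zeta$ because the lemma allows $\zeta\ll\delta$. It is also unnecessary: placing the transition region of $R$ inside $[(1-\delta)/\gamma,1/\gamma]$ with accuracy $O(\zeta/\gamma)$ already gives $|\gamma x R(x)|\le1$ on $[-1,1]$, up to the harmless $(1+\zeta')^{-1}$ factor from your Step 2.
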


Next we give a direct corollary stated in a manner that will be convenient for our proof of \cref{thm:sequential-repetition}.

\begin{corollary}\label{cor:GSLW}
    Let $U,\Pi_\inp,\Pi_\out,K,\delta,\zeta$ be as above and suppose that $c>0$ and $\|K\| \leq \sqrt{c} < 1-\delta$. Then there exists a unitary $W$ implementable by a circuit of size at most $\poly(s,1/c,1/\delta,\log(1/\zeta))$ such that the following holds. Define \[L \coloneqq \left(\bra{0}_{\mathsf{C}} \otimes \Pi_\out\right)W\left(\ket{0}_{\mathsf{C}} \otimes \Pi_\inp\right).\] For any register $\mathsf{R}$, normalized state $\ket{\psi}$ in the image of $\Pi_\inp \otimes I_{\mathsf{R}}$, and any projector $\Pi$, \[\|\Pi(L \otimes I_{\mathsf{R}})\ket{\psi}\|^2 \geq \frac{(1-\delta)^2}{c}\|\Pi(K \otimes I_{\mathsf{R}})\ket{\psi} \|^2 - \zeta.\]
\end{corollary}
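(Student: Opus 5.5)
The plan is to apply \Cref{lemma:GSLW} with amplification factor $\gamma \coloneqq (1-\delta)/\sqrt{c}$ and then handle the tensor factor $\mathsf{R}$ and the projector $\Pi$ by elementary norm estimates. First I check the hypotheses. Since $\sqrt{c} < 1-\delta < 1$, we get $\gamma > 1$. The assumption $\|K\| \le \sqrt{c}$ gives exactly $\|K\| \le (1-\delta)/\gamma$. So \Cref{lemma:GSLW} yields a unitary $W$ on one extra qubit $\mathsf{C}$, of size $\poly(s,\gamma,1/\delta,\log(1/\zeta'))$, with $\|L - \gamma K\| < \zeta'$ for a precision $\zeta'$ fixed below. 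Because $\gamma \le 1/\sqrt{c} \le 1/c$ (as $c<1$), this size is $\poly(s,1/c,1/\delta,\log(1/\zeta'))$.

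Next I tensor with the identity on $\mathsf{R}$. The operator norm is stable under tensoring, so $\|(L - \gamma K)\otimes I_{\mathsf{R}}\| < \zeta'$. Fix a normalized $\ket{\psi}$ in the image of $\Pi_\inp \otimes I_{\mathsf{R}}$ and set
\[
a \coloneqq \|\Pi(L\otimes I_{\mathsf{R}})\ket{\psi}\|, \qquad b \coloneqq \gamma\,\|\Pi(K\otimes I_{\mathsf{R}})\ket{\psi}\|.
\]
Since $\|\Pi\|\le 1$, the triangle inequality gives $a \ge b - \zeta'$. Also $b \le \gamma\|K\| \le 1-\delta \le 1$.

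Finally I convert this to squared norms. If $b \le \zeta'$, then $b^2 \le \zeta'^2 \le 2\zeta'$, and the claim is trivial. Otherwise
\[
a^2 \ge (b-\zeta')^2 \ge b^2 - 2b\zeta' \ge b^2 - 2\zeta'.
\]
In both cases $a^2 \ge b^2 - 2\zeta'$. Since $b^2 = \frac{(1-\delta)^2}{c}\|\Pi(K\otimes I_{\mathsf{R}})\ket{\psi}\|^2$, choosing $\zeta' \coloneqq \zeta/2$ gives the stated inequality. This choice costs only an additive constant inside $\log(1/\zeta)$.

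No step is a real obstacle. The only points needing care are confirming that $\gamma > 1$ follows from the strict inequality $\sqrt{c} < 1-\delta$, and keeping track of the factor $2$ in the precision when passing from norms to squared norms.
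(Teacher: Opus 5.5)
Your proof is correct and follows the paper's route: the same amplification factor $\gamma=(1-\delta)/\sqrt{c}$, the same precision $\zeta/2$, and the same use of $\|\gamma K\|\le 1-\delta<1$ to turn an additive operator-norm error into an additive $\zeta$ loss on squared norms. The only cosmetic difference is that you go through the reverse triangle inequality and a case split on $b$, whereas the paper expands the squared norm and bounds the cross term by Cauchy--Schwarz directly.
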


\begin{proof}
    Set $\gamma' = \frac{1-\delta}{\sqrt{c}}, \zeta' = \zeta/2$, and apply \cref{lemma:GSLW} with parameters $\gamma',\delta,\zeta'$ to obtain $L$ as defined in the corollary statement such that $\|L - \gamma'K\| < \frac{\zeta}{2}$. Let \[\ket{\nu} \coloneqq \left((L-\gamma'K) \otimes I_{\mathsf{R}}\right)\ket{\psi}\] and note that $\|\ket{\nu}\| < \zeta/2$. 
    We have that 
    \begin{align*}
        \|\Pi(L \otimes I_{\mathsf{R}})\ket{\psi}\|^2 &= \|\gamma'\Pi(K \otimes I_{\mathsf{R}})\ket{\psi} + \Pi\ket{\nu} \|^2 \\ &= \|\gamma'\Pi(K \otimes I_{\mathsf{R}})\ket{\psi}\|^2 + 2\gamma'\operatorname{Re}\bra{\psi}(K^\dagger \otimes I_{\mathsf{R}})\Pi\ket{\nu} + \|\Pi\ket{\nu}\|^2 \\
        &\geq \|\gamma'\Pi(K \otimes I_{\mathsf{R}})\ket{\psi}\|^2 - 2\|\gamma'\Pi(K \otimes I_{\mathsf{R}})\ket{\psi}\|\|\Pi\ket{\nu}\| \\ &> \frac{(1-\delta)^2}{c}\|\Pi(K \otimes I_{\mathsf{R}})\ket{\psi}\|^2 - \zeta. 
    \end{align*}
    Here the last step uses $\|\gamma'K\|\leq1-\delta<1$ and $\|\Pi\ket{\nu}\|<\zeta/2$.
\end{proof}

Finally, we prove \cref{thm:sequential-repetition}.

\begin{proof} (of \cref{thm:sequential-repetition}) Fix a search game $\Gamma$, polynomial $\ell = \ell(\secp)$, $\varepsilon = \varepsilon(\secp)$ and any adversary $\cA = (\ket{\psi},A_1,\dots,A_\ell,A_\out)$. For $j \in [\ell]$, define $q_j$ as follows. Start immediately before copy $j$ and initialize $A_j$ with an arbitrary state. Run $A_j$ through $A_\ell$ interacting with $V$, run $A_\out$ to produce guesses $y_j,\dots,y_\ell$, and accept if $y_j = x_j,\dots,y_\ell = x_\ell$. Then $q_j$ is the maximum over all initial states of the probability of accepting. 

Define $q_{\ell+1} = 1$, and note that $p^\ell_{\Gamma,\secp}(\cA) \leq q_1 \leq q_2 \leq \dots \leq q_{\ell+1}$. The theorem follows if $\max_{j\in[\ell]}(q_j-\varepsilon q_{j+1})_+$ is negligible, where $(u)_+:=\max\{u,0\}$: iterating this uniform recurrence gives $q_1\leq\varepsilon^\ell+\ell\cdot\negl(\secp)$. Towards contradiction, suppose that there exist a polynomial $r(\secp)\geq1$, an infinite set of security parameters, and an index $j=j(\secp)\in[\ell]$ such that, on this infinite set, $q_j>\varepsilon q_{j+1}+\eta$, where $\eta\coloneqq1/r(\secp)$. Include $j$ in the non-uniform advice. Note that $q_{j+1} \geq q_j \geq \eta$, and consider the following two cases.

First suppose that $q_{j+1} \geq 1-\eta/16$. Define an adversary $\cB =(\ket{\phi},B,B_\out)$ against the $j$'th sequential game as follows. Let $\ket{\phi}$ be a pure state that attains $q_j$, $B = A_j$, and $B_\out$ run the remainder of $A_{j+1},\dots,A_\ell,A_\out$ while internally simulating $V$ in each game, and outputting the value $y_j$ output by $A_\out$. Then the probability that $y_j = x_j$ is at least

\[\varepsilon q_{j+1} + \eta \geq \varepsilon - \frac{\varepsilon \eta}{16} + \eta \geq \varepsilon + \frac{15\eta}{16},\] a contradiction.

Next suppose that $q_{j+1} < 1-\eta/16$. Let $U$ be a purification, retaining all measurement outcomes and discarded registers, of the procedure that runs $A_{j+1},\dots,A_\ell,A_\out$ while internally simulating $V$ in each game. Let $\mathsf{M}$ be the register passed to $A_{j+1}$ and $\mathsf{Z}$ be the auxiliary register needed to define the purification $U$, initialized to $\ket{0}_{\mathsf{Z}}$. Define $\Pi_\inp = I_{\mathsf{M}} \otimes \ketbra{0}_{\mathsf{Z}}$ and define $\Pi_\out$ to be the projector applied to the output of $U$ that checks that $y_i = x_i$ for all $i \in \{j+1,\dots,\ell\}$. Then defining \[K = \Pi_\out U \Pi_\inp,\] we have that $\|K\|^2 = q_{j+1}$. The circuits for $U,U^\dagger$ and coherent tests of both projectors have polynomial size.

Set $b=\lceil\log_2(32/\eta)\rceil$ and $c=2^{-b}\lceil2^bq_{j+1}\rceil$, included in the non-uniform advice, so that $q_{j+1}\leq c\leq q_{j+1}+\eta/32<1$, and apply \cref{cor:GSLW} with $\delta,\zeta = \eta/128$. Note that the hypothesis holds since $c < 1- \eta/32 < (1-\eta/128)^2 = (1-\delta)^2$. Since $c\geq\eta$, we obtain a polynomial-size circuit $W$ which we will use to define an adversary $\cB =(\ket{\phi},B,B_\out)$ as follows. Let $\ket{\phi}$ be a pure state that attains $q_j$ and $B = A_j$. Let $B_\out$ prepare $\ket{0}_{\mathsf{Z}}$ and an additional single-qubit register $\ket{0}_{\mathsf{C}}$, run $W$, and attempt to project the output onto $\ketbra{0}_{\mathsf{C}} \otimes \Pi_\out$. If successful it measures and outputs $y_j$. Otherwise, it outputs an arbitrary string. 

Let $\ket{\phi'}_{\mathsf{M},\mathsf{Z},\mathsf{X}_j,\mathsf{E}}$ be a purification of the output of $B$ interacting with $V$, tensored with $\ket{0}_{\mathsf{Z}}$, where $\mathsf{X}_j$ holds $V$'s output and $\mathsf{E}$ purifies the interaction. Then $B_\out$ acts on $\mathsf{M},\mathsf{Z},\mathsf{C}$ and we take $\mathsf{R} = \mathsf{X}_j\mathsf{E}$ to be the register that it does not touch. Let $\Pi$ be the projector checking that $x_j = y_j$ and let $L$ be as defined in the statement of \cref{cor:GSLW}. Then the probability that $\cB$ succeeds is at least \begin{align*}
\left\|\Pi(L\otimes I_{\mathsf{R}})\ket{\phi'}\right\|^2
&\geq \frac{(1-\delta)^2}{c}
       \left\|\Pi(K\otimes I_{\mathsf{R}})\ket{\phi'}\right\|^2-\zeta \\
&= (1-\zeta)^2\frac{q_j}{c}-\zeta
 \geq \frac{q_j}{c}-3\zeta \\
&\geq \frac{\varepsilon q_{j+1}+\eta}{c}-3\zeta
 = \varepsilon+\frac{\eta-\varepsilon(c-q_{j+1})}{c}-3\zeta \\
&\geq \varepsilon+\frac{\eta-\eta/32}{c}-3\zeta
 \geq \varepsilon+\frac{31\eta}{32}-3\zeta \\
&= \varepsilon+\frac{121\eta}{128},
\end{align*} a contradiction. 
\end{proof}

\section{Succinct Arguments for QMA}
Let $T = T(\lambda)$ be a polynomial in the security parameter and a power of $2$, and identify $[T]$ with $\{0,1\}^{\log T}$ using its canonical encoding. Consider the following protocol between a client and a (possibly malicious) server.
\begin{itemize}
    \item[(i)] Run $\lambda$ independent copies of the protocol in \Cref{sec:protocol} in sequence. Denote the output state held by the (honest) server in the $i$-th copy by
\[
\bigotimes_{t\in\{0,1\}^{\log T}}\ket{\psi_{y^{(i)}_{t,0},y^{(i)}_{t,1}}}
\]
up to an irrelevant global phase.
\item[(ii)] After the completion of all the protocols, the client samples independent uniform permutations $\pi_i \sim S_T$ and sends them to the server.
\item[(iii)] For every $t\in\{0,1\}^{\log T}$, the server takes the states indexed by $\pi_i(t)$. For each $i\in\{2,\ldots,\lambda\}$, it applies a CNOT from the first leading qubit to the $i$-th leading qubit, measures the latter in the computational basis, and records the outcome $e_{i,t}\in\{0,1\}$. After discarding the measured leading qubits, the remaining state is
$
\ket{\psi_{Y_{t,0},Y_{t,1}}},
$
where
\begin{equation}\label{eq:combined-endpoints}
\begin{aligned}
Y_{t,0}
&:=y^{(1)}_{\pi_1(t),0}\Vert
y^{(2)}_{\pi_2(t),e_{2,t}}\Vert\cdots\Vert
y^{(\lambda)}_{\pi_\lambda(t),e_{\lambda,t}},\\
Y_{t,1}
&:=y^{(1)}_{\pi_1(t),1}\Vert
y^{(2)}_{\pi_2(t),1\oplus e_{2,t}}\Vert\cdots\Vert
y^{(\lambda)}_{\pi_\lambda(t),1\oplus e_{\lambda,t}}.
\end{aligned}
\end{equation}
The server sends the outcomes $\{e_{i,t}\}_{i,t}$ to the client.

\item[(iv)] The client aborts if the verification in any of the copies fails and otherwise it locally computes $\{Y_{t,0},Y_{t,1}\}_{t}$ as specified above.
\end{itemize}
Clearly the only quantum communication between the client and the server is the one happening in the protocol from \Cref{sec:protocol}. Therefore, quantum states are only sent from the client to the server. Moreover, before any message is exchanged, the client can sample the classical description of a circuit $C_i$ of size $\mathrm{poly}(\log T, \lambda) = \mathrm{poly}(\lambda)$ and compute
\[
C_i \ket{0}_{\mathsf{Q}}\ket{0}_{\mathsf{W}} = \ket{\psi_i}_{\mathsf{Q}}\ket{0}_{\mathsf{W}}
\]
where $\ket{\psi_i}$ is the state sent from the client to the server in the $i$-th quantum communication round. In particular, the state sent in the $i$-th round is independent of the protocol transcript (but might depend on the client's internal randomness). We refer to an interactive protocol that satisfies such properties as being \emph{quantum-succinct and history-independent}. 
\begin{definition}[Quantum-Succinctness and History-Independence]\label{def:qshi}
A protocol with public input $x$ is quantum-succinct and history-independent if the client: (i) samples a uniform classical seed $s$ of fixed $\poly(\lambda)$ length before interaction; (ii) computes from $(s,x)$, in time $|x|\poly(\lambda)$, descriptions of circuits $C_i$ of fixed total size $\poly(\lambda)$, each preparing a message in fresh registers 
\[
C_i\ket{0}_{\mathsf{Q}_i}\ket{0}_{\mathsf{W}_i}
=\ket{\psi_i}_{\mathsf{Q}_i}\ket{0}_{\mathsf{W}_i},
\]
sends $\mathsf{Q}_i$ at a publicly pre-determined round, and discards $\mathsf{W}_i$; and (iii) is otherwise classical, with messages and acceptance determined by $s$, $x$, any private classical input, and the classical transcript.
\end{definition}
The protocol above satisfies this definition: Its preparation circuits use only the short sampled key and string lists and efficient PRF evaluation. These lists may be sampled directly, with only the remaining classical coins expanded from a PRG seed. Replacing those coins by PRG output changes QPT views only negligibly; the proofs below use uniform coins. Moreover, we say that a protocol is a \emph{claw-state correlation} protocol if it satisfies \Cref{lem:claw-correct} and \Cref{lem:claw-correlation}, which we prove in the following.
\begin{lemma}\label{lem:claw-correct}
    When both parties are honest, except with negligible probability, at the end of the protocol execution the server holds the state
    \[
    \bigotimes_{t\in\{0,1\}^{\log T}}\ket{\psi_{Y_{t,0},Y_{t,1}}}
    \]
    and the client holds $\{Y_{t,0},Y_{t,1}\}_{t\in\{0,1\}^{\log T}}$.
\end{lemma}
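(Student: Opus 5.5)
We split the argument into honest correctness of each copy of the protocol from \Cref{sec:protocol} and then the correctness of the gluing step (iii).

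\textbf{Per-copy correctness.} For a single copy, step (iv) is an exact isometry. For each $t$, the server recovers every $f_k(t')$ with $t'\neq t$ from the table $\{c_{t',i}\}$ and the keys $k_{i,t_i}$ it holds in superposition. It then prepares the missing claws coherently and reorders the registers. This produces the product state
\[
\Bigl(\bigotimes_i \tfrac{\ket{0,x_{i,0},k_{i,0}}+\ket{1,x_{i,1},k_{i,1}}}{\sqrt2}\Bigr)\otimes\bigotimes_t\ket{\psi_{y_{t,0},y_{t,1}}}.
\]
The garbage uncomputation needed here is routine. Because the first factor is a product state unentangled with the claws, the Hadamard measurement in step (v) leaves the claw registers untouched. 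The outcome $d_i$ is uniform on the set satisfying $\langle d_i,(1,x_{i,0}\oplus x_{i,1},k_{i,0}\oplus k_{i,1})\rangle=0$, so the parity check always passes. The check $d_i^{(1)}\neq 0$ fails with probability at most $2^{-n}$ per $i$. A union bound over $i\in[\log T]$ and the $\lambda$ copies gives failure probability at most $\lambda(\log T)2^{-n}=\negl(\lambda)$, since $n=\lambda$. Conditioned on passing, the server holds $\bigotimes_{i,t}\ket{\psi_{y^{(i)}_{t,0},y^{(i)}_{t,1}}}$, and the client knows every $y^{(i)}_{t,b}$. Any relative phases are absent because the claw registers are in a fixed product state.

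\textbf{Gluing.} Fix $t$ and write $u_i=\pi_i(t)$. Since each $\pi_i$ is a permutation, the sets $\{(i,\pi_i(t))\}_t$ partition the claws, so distinct $t$ act on disjoint registers and can be handled separately. We then compute directly on
\[
\bigotimes_{i=1}^{\lambda}\tfrac{1}{\sqrt2}\sum_{b_i}\ket{b_i,y^{(i)}_{u_i,b_i}}.
\]
Applying a CNOT from $b_1$ onto $b_i$ replaces $b_i$ by $b_i\oplus b_1$. Measuring it yields $e_{i,t}$, which is uniform, and collapses the state onto the terms with $b_i=b_1\oplus e_{i,t}$. The state becomes
\[
\tfrac{1}{\sqrt2}\sum_{b_1}\ket{b_1}\bigotimes_i\ket{y^{(i)}_{u_i,b_1\oplus e_{i,t}}},
\]
and after discarding the measured qubits this is exactly $\ket{\psi_{Y_{t,0},Y_{t,1}}}$ with the endpoints of \Cref{eq:combined-endpoints}, up to register reordering. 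The server sends the $e_{i,t}$ to the client, who already knows all $y^{(i)}_{u,b}$ and $\pi_i$. The client can therefore compute $Y_{t,0},Y_{t,1}$, and it does not abort because every check passed.

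No step is a real obstacle. The point needing the most care is the claim that step (iv) produces a clean product state without residual entanglement: the coherent recovery computation must be uncomputed. This is needed so that the Hadamard measurement on the first registers leaves the claws intact.
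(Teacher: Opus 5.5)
Your proof is correct and follows the paper's argument. It uses the same CNOT-and-measure computation fixing $b_i=b_1\oplus e_{i,t}$, the observation that distinct $t$ act on disjoint claws, and the union bound $\lambda(\log T)2^{-n}$ over the copies; the only addition is that you re-derive the per-copy honest correctness (the clean product state after step (iv) and the distribution of the Hadamard outcomes), which the paper takes from the remark at the end of \Cref{sec:protocol}.
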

\begin{proof}
    Fix $t$. By correctness of the weak protocol, before gluing the server holds the tensor product of the $\lambda$ claws indexed by $\pi_i(t)$. After the CNOTs this state is
    \[
    2^{-\lambda/2}\!
    \sum_{b_1,\ldots,b_\lambda\in\{0,1\}}
    \ket{b_1,b_1\oplus b_2,\ldots,b_1\oplus b_\lambda}
    \bigotimes_{i=1}^{\lambda}\ket{y^{(i)}_{\pi_i(t),b_i}}.
    \]
    Measuring the last $\lambda-1$ leading qubits fixes $b_i=b_1\oplus e_{i,t}$ for $i\geq2$, leaving two equally weighted terms. After normalization and discarding the measured qubits, the remaining state is therefore
    \[
    \frac{\ket{0,Y_{t,0}}+\ket{1,Y_{t,1}}}{\sqrt2}
    =\ket{\psi_{Y_{t,0},Y_{t,1}}},
    \]
    with strings given by \Cref{eq:combined-endpoints}. The gluing operations act separately for each $t$, so the resulting states form the claimed tensor product, and the communicated measurement outcomes let the client compute the same strings. Finally, each weak copy rejects an honest server with probability at most $(\log T)2^{-n}$. By a union bound over the $\lambda$ copies, the abort probability is at most $\lambda (\log T)2^{-n}=\negl(\lambda)$.
\end{proof}
The following lemma establishes the security of the protocol.

\begin{lemma}\label{lem:claw-correlation}
    For every non-uniform QPT server and every $t^*\in\{0,1\}^{\log T}$, consider the experiment that runs the protocol, gives the server $\{Y_{t,0},Y_{t,1}\}_{t\neq t^*}$ after the client produces its output, and asks it to return $(Y_{t^*,0},Y_{t^*,1})$. Assuming the existence of quantum-secure one-way functions, the probability that the client does not abort and the server returns the correct pair is negligible.
\end{lemma}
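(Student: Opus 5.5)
The plan is to reduce \Cref{lem:claw-correlation} to \Cref{thm:sequential-repetition} applied to the weak game of \Cref{thm:single}, with $\ell=\lambda$ and $\varepsilon=1/4$.

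\textbf{Step 1: cast one weak copy as a search game.} Define $\Gamma$ to be one run of the protocol in \Cref{sec:protocol}, followed by the extra messages of the experiment in \Cref{thm:single}. The client samples a fresh uniform $u^*\sim\{0,1\}^{\log T}$ and sends $u^*$ and $\{y_{u,0},y_{u,1}\}_{u\ne u^*}$. The designated output register is set to $x=(y_{u^*,0},y_{u^*,1})$ if $d$ passes the checks, and to $\bot$ otherwise. Then \Cref{thm:single} says exactly that every QPT adversary against $\Gamma$ has $p_{\Gamma,\secp}\le 1/4+\negl$.

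\textbf{Step 2: reduce the glued experiment to the $\lambda$-fold sequential game.} Given a server $\mathcal S$ breaking \Cref{lem:claw-correlation} for a fixed $t^*$, build an adversary $\mathcal A$ against the $\lambda$-fold repetition of $\Gamma$. It plays the $\lambda$ copies sequentially by running $\mathcal S$. In copy $i$ it receives the index $u_i^*$ and all strings except those at $u_i^*$. The key observation is that choosing uniform permutations $\pi_i$ conditioned on $\pi_i(t^*)=u_i^*$ yields exactly the distribution of independent uniform permutations, since the $u_i^*$ are uniform and independent. However, the permutations must be sent \emph{before} the gluing step, and in $\Gamma$ the index $u_i^*$ arrives only after the server's message $d$. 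So $\mathcal A$ postpones all of $\mathcal S$'s post-protocol behaviour until every copy has finished; this is allowed, because the sequential game lets predictions be postponed. To make this work, $\mathcal A$ keeps in its memory $\mathcal S$'s state after sending $d$ in each copy. It then receives the copy-$i$ final message, and only once all copies are done does it run steps (ii)–(iv) with the chosen $\pi_i$. Here $\mathcal A$ plays the client's gluing role itself: it receives $\{e_{i,t}\}$, computes $\{Y_{t,0},Y_{t,1}\}_{t\ne t^*}$ from the known strings $y^{(i)}_{u,b}$ for $u\ne u_i^*$, and feeds them to $\mathcal S$. For $t\ne t^*$ we have $\pi_i(t)\ne u_i^*$, so all needed strings are known.

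\textbf{Step 3: read off the constituent guesses.} From $\mathcal S$'s output $(Y_0,Y_1)$, the adversary $\mathcal A$ parses the blocks. By \Cref{eq:combined-endpoints} and the recorded $e_{i,t^*}$, it recovers the guess $(y^{(i)}_{u_i^*,0},y^{(i)}_{u_i^*,1})$ for each $i$: block $i$ of $Y_{e_{i,t^*}}$ and block $i$ of $Y_{1\oplus e_{i,t^*}}$, for $i\ge2$. Non-abort corresponds to all checks passing, so all $x_i\ne\bot$. Hence a correct $\mathcal S$ makes $\mathcal A$ win all $\lambda$ copies. The probabilities are equal, except that in $\Gamma$ the extra final messages are delivered before the permutations. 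This does not matter, since the client's abort decision and the $Y$ strings are unaffected.

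\textbf{Step 4: conclude.} \Cref{thm:sequential-repetition} gives success at most $4^{-\lambda}+\negl=\negl$. The main obstacle is Step 2, namely checking that the message ordering is consistent: the real protocol never reveals $u_i^*$ while the game does, and we must ensure $\mathcal S$'s view is identically distributed. I would verify this by noting that $\mathcal S$ sees only the permutations and the strings $\{Y_{t}\}_{t\ne t^*}$, both perfectly simulated. I would also make sure that $\mathcal A$'s between-copy strategies are of the form allowed in the sequential game; they act only on retained memory, so they are.
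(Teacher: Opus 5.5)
Your proposal is correct and follows the paper's proof essentially step for step. You cast the experiment of \Cref{thm:single} as a search game whose target is the pair or $\bot$, forward the weak copies while withholding the disclosed indices $u_i$ and the other strings, and sample $\pi_i$ uniformly subject to $\pi_i(t^*)=u_i$; you then compute the leakage for $t\neq t^*$, split the blocks and undo the swaps of \Cref{eq:combined-endpoints}, and conclude with \Cref{thm:single,thm:sequential-repetition}. Your choice to always continue regardless of whether the checks passed corresponds to the paper's ``dummy continuations,'' since those branches have $x_i=\bot$ and cannot win.
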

\begin{proof}
    Fix any $t^*=t^*(\lambda)\in\{0,1\}^{\log T}$. We reduce recovery of its pair of strings to winning $\lambda$ sequential copies of the search game in \Cref{thm:single}. The private output of each game is its target pair if the check passes, and $\bot$ otherwise. The reduction forwards each weak-protocol interaction to the server, but stores the disclosed target index $u_i$ and strings for the other claws without revealing them.

    After all copies have finished, the reduction samples independent permutations uniformly subject to $\pi_i(t^*)=u_i$ and sends them to the server. Since the $u_i$ are independent uniform indices and were withheld during the weak interactions, this has exactly the joint distribution of the original protocol. Once the server sends its gluing bits, the stored strings determine every combined pair at $t\neq t^*$, so the reduction can supply all the required leakage.

    A correct guess of $(Y_{t^*,0},Y_{t^*,1})$ gives every raw target pair by splitting the strings into blocks and undoing the swaps specified in \Cref{eq:combined-endpoints}. Failed checks can be followed by dummy continuations, since those branches cannot win. Thus \Cref{thm:single,thm:sequential-repetition} give
    \[
    \Pr[\text{no abort and a correct target pair}]
    \leq 4^{-\lambda}+\negl(\lambda)=\negl(\lambda).
    \]
    Non-uniformity makes this bound uniform over target sequences. A union bound over the polynomially many targets also gives hardness of recovering any pair without being given the other strings.
\end{proof}

\subsection{Blind Delegation}
We recall the definition of a blind delegation protocol.
\begin{definition}[Blind Delegation]\label{def:blind}
A blind delegation protocol for a public quantum circuit, viewed as a map $Q:\mathsf{X}\mathsf{B}\to\mathsf{O}$, takes a private classical input $x$ from the client and a quantum input on $\mathsf{B}$ from the server. In an honest execution, the client obtains one-time-pad keys $(r,s)$. For every input state $\rho_{\mathsf{BR}}$ with arbitrary reference $\mathsf{R}$, let $\sigma_{\mathsf{OR}}$ be the server's output and reference after applying $\mathsf{Z}^s\mathsf{X}^r$ on $\mathsf{O}$ whenever the client does not abort, averaged over all randomness and measurement outcomes. Represent abort by an orthogonal flag in $\mathsf{O}$, and embed the ideal output in the non-abort subspace. Correctness requires
\[
\frac12\left\|\sigma_{\mathsf{OR}}-
(Q\otimes I_{\mathsf{R}})
\bigl(\ketbra{x}{x}_{\mathsf{X}}\otimes\rho_{\mathsf{BR}}\bigr)\right\|_1
\leq\negl(\lambda).
\]
For every QPT server $\mathcal A$, every $\rho_{\mathsf{BR}}$, and all equal-length inputs $x_0,x_1$, blindness requires
\[
\mathsf{View}_{\mathcal A}(\rho_{\mathsf{BR}},x_0)
\approx_c
\mathsf{View}_{\mathcal A}(\rho_{\mathsf{BR}},x_1),
\]
where the view includes the server's entire output, $\mathsf{R}$, and the abort flag, with the same public circuit $Q$ in both experiments. The factor $I_{\mathsf{R}}$ leaves the reference untouched.
\end{definition}
Using the protocol defined above and a result from \cite{BK25}, we can establish the existence of a quantum-succinct and history-independent blind delegation protocol.
\begin{lemma}\label{lem:blind}
    Assuming the existence of quantum-secure one-way functions, for any polynomial-size quantum circuit $Q$, there exists a quantum-succinct and history-independent (\Cref{def:qshi}) blind delegation protocol for $Q$. Moreover, its round complexity is bounded by $\mathrm{poly}(\lambda, \delta)$, where $\delta$ is the $\mathsf{T}$-depth of $Q$.
\end{lemma}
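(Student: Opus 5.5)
The plan is to instantiate the generic transformation of \cite{BK25}, which turns any claw-state correlation protocol (in the sense of \Cref{lem:claw-correct} and \Cref{lem:claw-correlation}) into a blind delegation protocol. The underlying correlation protocol will be the amplified protocol of this section. The proof then consists of three checks: the hypotheses of the \cite{BK25} theorem hold for our protocol, correctness and blindness transfer, and the transformation preserves quantum-succinctness, history-independence, and round complexity.

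\textbf{Choosing the number of claws.} Given $Q$ with polynomially many gates and $\mathsf{T}$-depth $\delta$, let $T$ be the smallest power of $2$ at least the number of claw states the \cite{BK25} compiler consumes. That number is polynomial: roughly one claw per qubit for the initial quantum one-time pad, plus a constant number per $\mathsf{T}$ gate for the encrypted $\mathsf{P}$-correction. We run the protocol of this section with this $T$.

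\textbf{Correctness and blindness.} \Cref{lem:claw-correct} supplies the honest claw states, with negligible abort probability, and hence negligible trace-distance error in \Cref{def:blind}. Blindness in \cite{BK25} is proved by reducing to the hardness of guessing one claw pair while given the others, which is exactly the guarantee of \Cref{lem:claw-correlation}. One point needs care: \cite{BK25} also lets the client's later classical messages depend on the claw endpoints. The hybrid argument there replaces one encrypted bit at a time, using a Goldreich--Levin-type or hardcore-bit step derived from the hardness of computing $(Y_{t,0},Y_{t,1})$. I would verify that the reduction only needs the endpoints of the other claws and the client's remaining private randomness. That is precisely the leakage \Cref{lem:claw-correlation} allows. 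Any extra inversion step is quantum-secure from one-way functions, or from the hardcore-bit argument \cite{BK25} already uses. This matching of the security notion is the step I expect to be the main obstacle, since \cite{BK25} states its interface for oblivious state preparation or claw generation with a specific leakage model. I would first try to show that our ``all other pairs revealed'' guarantee is at least as strong as what their proof invokes.

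\textbf{Efficiency.} All quantum messages come from the $\lambda$ sequential copies of the protocol in \Cref{sec:protocol}. Each copy uses a circuit of size $\mathrm{poly}(\log T,\lambda)=\mathrm{poly}(\lambda)$, determined by the client's seed before any interaction. So \Cref{def:qshi} holds, with the remaining classical coins expanded from a PRG seed as remarked above. The \cite{BK25} layer is entirely classical for the client: it sends one-time-pad-dependent masks and corrections computed from $\{Y_{t,b}\}$ and the transcript. Rounds consist of $O(\lambda)$ rounds of claw generation plus one round per $\mathsf{T}$ layer for the corrections, which gives a round complexity of $\mathrm{poly}(\lambda,\delta)$.
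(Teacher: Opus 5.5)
Your top-level route is the paper's: feed the amplified claw-state correlation protocol into the \cite{BK25} machinery, and note that only classical communication is added. Your efficiency and round accounting is fine. The gap is in the step you defer as ``matching the security notion.'' The obstruction there is not how much leakage is allowed but the \emph{interface}. The \cite{BK25} delegation protocol invokes, at each $\mathsf{T}$ gate, an OSP whose basis $\theta'_i$ is \emph{chosen by the client} at that moment, as a function of the input pad and earlier key updates. Your claws, by contrast, are a batch generated before any input-dependent message. The BB84 correlation $(\mathsf{H}^{\theta_i}\ket{x_i};\theta_i,x_i)$ extracted from each claw has a basis $\theta_i$ the client does not control. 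The paper closes this in two steps. First, it converts each claw into such a correlation via \cite[Theorem~4.7]{BK25}, padding the endpoints to $b\Vert Y_{i,b}$ so they are distinct as that theorem requires. It proves basis-hiding for coordinate $i^*$ \emph{jointly with all other labels} $\{\theta_j,x_j\}_{j\neq i^*}$, by running Goldreich--Levin with that leakage against \Cref{lem:claw-correlation}. Second, at gate time the client sends $b_i=\theta_i\oplus\theta'_i$ and the server applies $\mathsf{H}^{b_i}$. Note also that \cite{BK25} proves blindness from OSP basis-hiding, not directly from hardness of guessing a claw pair, so the Goldreich--Levin step is not optional.

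Second, your claim that the blindness reduction ``only needs the endpoints of the other claws and the client's remaining private randomness'' fails for a naive hybrid. The later chosen bases $\theta'_{k'}$, and hence the later messages $b_{k'}$, are functions of the running Pauli keys. Those keys are updated using the target label $x_k$, so a reduction lacking $(\theta_k,x_k)$ cannot simulate them. Your claim becomes true only if you switch the chosen bases to $0$ in \emph{reverse} gate order. Then at gate $k$ the earlier labels and transcript determine $\theta'_k$, every later visible message is just $b_{k'}=\theta_{k'}$, and $x_k$ affects only private key updates. After all switches, the only input-dependent message is the classical one-time pad of $x$, which gives statistical blindness. (A minor point: no claws are needed for an initial quantum one-time pad, since the client's input is classical and the server's own input need not be hidden. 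This over-provisioning is harmless.)
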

\begin{proof}

    Use a Clifford+$\mathsf{T}$ realization of $Q$, including any negligible approximation error in correctness. Let $T = \poly(\lambda)$ be a power-of-two upper bound on its number of $\mathsf{T}$ gates and let $\Pi_\claw$ be a quantum-succinct and history-independent claw-state correlation protocol with $T$ output claw states and round complexity $\poly(\lambda)$, i.e. it satisfies \cref{def:qshi}, \cref{lem:claw-correct}, and \cref{lem:claw-correlation}. Above we showed that $\Pi_\claw$ exists assuming quantum-secure one-way functions. Unused correlations may be discarded; if there are no $\mathsf{T}$ gates, omit this setup.

    We will use $\Pi_\claw$ to construct a blind delegation protocol $\Pi_\blind$ for $Q$ by applying two transformations from \cite{BK25}: (1) $\mathsf{BB84}$ correlations from claw-state correlations, and (2) blind delegation from $\mathsf{BB84}$ correlations.

    First, we specify the definition of a $\mathsf{BB84}$ correlation protocol $\Pi_\bbstate$. Let $T = \poly(\lambda)$ be the number of output correlations.

    \begin{itemize}
        \item Correctness: When both parties are honest, except with negligible probability, at the end of the protocol execution the server holds the state \[\bigotimes_{i \in [T]}\mathsf{H}^{\theta_i}\ket{x_i},\] and the client holds $\{\theta_i,x_i\}_{i \in [T]}$.
        \item Security: For every non-uniform QPT server and every $i^*=i^*(\lambda)\in[T]$, let $\mathrm{ok}$ denote that the client does not abort. On this branch give the server $\{\theta_j,x_j\}_{j\neq i^*}$ and let $\widehat\theta_{i^*}$ be its guess. We require
        \begin{equation}\label{eq:bb84-security}
        \left|\Pr[\mathrm{ok}\wedge\widehat\theta_{i^*}=\theta_{i^*}]
        -\tfrac12\Pr[\mathrm{ok}]\right|\leq\negl(\lambda).
        \end{equation}
        The view includes the abort flag and the server's full workspace. On abort the verifier samples the basis bits $\theta_i$ independently and uniformly at random.
    \end{itemize}

    For the first transformation, \cite[Theorem~4.7]{BK25} converts a claw-state generator into an oblivious state preparation (OSP) protocol. Apply this conversion in parallel to each of our $T$ claws. To meet its requirement that the two strings be distinct, set $L_{i,b}:=b\Vert Y_{i,b}$ and have the server copy its leading qubit into a fresh data qubit, obtaining $\frac1{\sqrt2}(\ket{0,L_{i,0}}+\ket{1,L_{i,1}})$. Recovering both strings recovers the original pair. For security of coordinate $i$, the reduction receives the other pairs of strings, simulates the other conversion clients, and computes their BB84 labels. The Goldreich--Levin reduction in that theorem then applies jointly with this leakage and the server's workspace: a noticeable advantage in the unconditional bound above gives a noticeable probability of recovering the target pair without abort, contradicting \cref{lem:claw-correlation}. Correctness follows from the cited conversion and \Cref{lem:claw-correct}. The transformation adds only classical communication and polynomially many rounds, so the resulting $\Pi_\bbstate$ remains quantum-succinct and history-independent.

    Generate this entire batch before sending any message depending on the client's input.
    For the second transformation, \cite[Lemma~6.10 and Theorem~6.11]{BK25} gives a protocol for blind delegation from any OSP. The BK25 protocol runs one OSP for each $\mathsf{T}$-gate, and here we will use the $i$-th OSP correlation $\mathsf{H}^{\theta_i}\ket{x_i}, (\theta_i,x_i)$ in place of the OSP protocol for the $i$-th $\mathsf{T}$-gate. In \cite{BK25}, the protocol for the $i$-th $\mathsf{T}$-gate requires the client to input a \emph{chosen} basis $\theta_i'$, while our OSP correlation yields a basis $\theta_i$ that is not controlled by the client. To remedy this, we simply have the client first send the bit $b_i = \theta_i \oplus \theta_i'$ to the server, who applies $\mathsf{H}^{b_i}$ to its state. After this, correctness follows directly from the arguments in \cite{BK25}. Further, note that the security of the OSP correlation protocol implies that the server's view on client input $\theta' = 0$ and $\theta' = 1$ are indistinguishable even given $\{\theta_j,x_j\}_{j \neq i}$.

    For blindness, switch the chosen bases to $0$ in reverse gate order, as in \cite[Theorem~6.11]{BK25}. At gate $k$, the reduction knows all labels except $(\theta_k,x_k)$: the earlier labels and transcript determine $\theta'_k$, while all later chosen bases have already been replaced by $0$. The BB84 guarantee makes $\theta_k$ indistinguishable from a fresh uniform bit jointly with the known labels and the server's state, so sending $\theta_k\oplus\theta'_k$ or $\theta_k$ gives indistinguishable views. Subsequent visible messages use only known labels; the unknown $x_k$ affects only private key updates. After polynomially many switches, the only input-dependent message is protected by the initial classical one-time pad, giving statistical blindness.

    Finally, as also observed in \cite{BLM26}, the client and server can participate in all OSPs for a given layer of $\mathsf{T}$-gates in parallel, with that layer's chosen bases fixed before its messages are sent. Thus, the round complexity of the blind delegation protocol is bounded by $\poly(\lambda,\delta)$. Moreover, since this transformation requires only classical communication, the blind delegation protocol remains quantum-succinct and history-independent.

\end{proof}

\subsection{Putting Things Together}

In \cite[Theorem~54]{BLM26} it is shown that a blind delegation protocol implies an argument for QMA.
\begin{lemma}\label{lem:qma}
Assuming a blind delegation protocol with round complexity $\poly(\lambda, \delta)$ with $\delta$ the $\mathsf{T}$-depth of the computed quantum circuit, there is an argument for QMA with completeness $1-\negl(\lambda)$, soundness at most a constant $s<1$, and a fixed $\mathrm{poly}(\lambda)$ number of rounds. 

Moreover, if the blind delegation protocol satisfies \Cref{def:qshi}, then so does the resulting interactive argument.
\end{lemma}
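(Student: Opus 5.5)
This lemma is essentially an invocation of \cite[Theorem~54]{BLM26}, so my plan is to instantiate that construction with an abstract blind delegation protocol and then check that the properties we need are preserved. The construction follows the compiled non-local game template. Start from the two-prover protocol of \cite{MNZ24} for QMA, which has completeness $1-\negl(\lambda)$ (after standard amplification of the QMA verifier) and constant soundness against entangled provers. In that protocol, the second prover's honest behavior is a fixed polynomial-size quantum circuit $Q$ acting on the verifier's classical question and the prover's share of the witness/EPR state. Compile this game into a single-prover protocol in the style of \cite{KLVY23,BK25}: the verifier first samples the question for prover one and runs blind delegation of $Q$ with that question as the private classical input. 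The prover's response is then Pauli-masked, and the verifier decrypts it using the keys $(r,s)$. Next, the verifier plays the second question in the clear. Finally, it applies the original game predicate.

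Completeness follows from correctness of blind delegation (\Cref{def:blind}), which holds with an arbitrary reference system. That reference here is the other half of the entangled state, so the decrypted answer distribution is negligibly close to the honest two-prover distribution. Rounds are controlled because $Q$ has a fixed $\mathsf{T}$-depth $\delta = \poly(\lambda)$, given that the circuit of \cite{MNZ24} for the delegated prover has size $\poly(\lambda)$ independent of the witness length, or at least has depth bounded accordingly as in \cite{BLM26}. So the delegation takes $\poly(\lambda,\delta)=\poly(\lambda)$ rounds, plus $O(1)$ further rounds.

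Soundness is the step I expect to be the main obstacle, and I would inherit it rather than reprove it. Blindness implies that the compiled game's value is at most the quantum commuting/tensor value plus a constant gap, by the compiled-game soundness theorems used in \cite{BLM26} (which rely on the specific rigidity of the \cite{MNZ24} game). The resulting soundness is a constant $s<1$, not negligible. I would verify that those theorems only use the blindness as formalized in \Cref{def:blind}, including the abort flag. Under that definition, an abort branch contributes only rejection, which can only decrease acceptance.

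For the ``moreover'' part, I would check the three conditions of \Cref{def:qshi} directly. The compiled verifier's only quantum messages are those of the blind delegation client. Its own randomness (the questions and the game coins) is classical and can be folded into the seed $s$. The delegation circuit $Q$ depends only on $x$ and public data, so the preparation circuits $C_i$ are still computed from $(s,x)$ in time $|x|\poly(\lambda)$, have total size $\poly(\lambda)$, and are sent at fixed rounds. All remaining computation, including decryption and the predicate check, is classical in $s$, $x$, and the transcript.
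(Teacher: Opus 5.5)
Your proposal follows essentially the same route as the paper. The paper simply invokes \cite[Theorem~54]{BLM26} and argues the moreover part by observing that the argument uses the blind delegation protocol as a black box with all other communication classical, so \Cref{def:qshi} is inherited; two details in your sketch need correcting: the delegated circuit's size generally grows with the witness length, so the fixed round bound comes only from its $\mathsf{T}$-depth being bounded independently of it (which is why the lemma is phrased via $\mathsf{T}$-depth), and any long verifier coins (e.g.\ the questions) must be expanded from a PRG for the seed to have fixed $\poly(\lambda)$ length.
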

The moreover part of the above statement is not explicitly stated in \cite{BLM26}, but it is implicit in their proof because the protocol uses the blind delegation (see also \cite[Theorem~40]{BLM26}) as a black-box and everything else is classical communication. Next, we invoke the communication compression compiler from \cite{BLM26}.
\begin{lemma}\label{lem:compression}
Let $\Pi$ be an $r(\lambda)$-round interactive protocol satisfying \Cref{def:qshi}, with no private client input besides $s$, where $r$ is bounded by a fixed polynomial and the public input $x$ specifies the classical next-message and decision algorithms with description size $O(|x|)$. Assuming collapsing hash functions, for every $\varepsilon=1/\mathrm{poly}(\lambda)$ there is a compiled protocol $\tilde\Pi$ with negligible honest-completeness loss such that, for every QPT compiled prover $\tilde P^*$, there is a QPT original prover $P^*$ with
 \[
 \Pr[\tilde V^{\tilde P^*}\text{ accepts}]
 \leq\Pr[V^{P^*}\text{ accepts}]+O(\varepsilon)+\negl(\lambda).
 \]
Its communication is at most $r(\lambda)\poly(\lambda,1/\varepsilon)$ and its verifier time is at most $|x|r(\lambda)\poly(\lambda,1/\varepsilon)$.
\end{lemma}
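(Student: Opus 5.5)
This lemma is the communication-compression compiler of \cite{BLM26}, so the plan is to check that our hypotheses match theirs and then transfer their guarantees. The compiler is a quantum analogue of Kilian's construction. The prover commits to each long classical message with a collapsing-hash tree commitment. The verifier does not read the message. Instead, the prover proves, through a succinct argument for the NP-type relation ``there is an opening of the committed transcript that is consistent with the verifier's next-message and decision algorithms (given by $x$) on seed $s$,'' that the committed transcript leads to acceptance.

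The main difference from a purely classical verifier is the quantum messages. Here \Cref{def:qshi} is exactly what is needed. All quantum messages are prepared by circuits $C_i$ that are computed from $(s,x)$ in time $|x|\poly(\lambda)$, have total size $\poly(\lambda)$, and do not depend on the transcript. The compiled verifier can therefore prepare and send them itself at the publicly fixed rounds, while every classical message of the original verifier is a function of $(s,x,\text{transcript})$.

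First I would specify $\tilde\Pi$. The compiled verifier samples $s$, sends the quantum states $\ket{\psi_i}$ at their rounds, and replaces its long classical messages with short commitments or pseudorandom expansions. Concretely, it reveals only the coins the prover needs, and the prover expands them as the original verifier would. Since $s$ is the only private client input, and the history-independence of the quantum part does not depend on it, the verifier's private state is just $s$.

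The verifier's final decision is a predicate of $(s,x,\text{committed prover messages})$. It is checked by a Kilian-style succinct argument, with the hash evaluated non-black-box. This requires $s$ to be revealed at the end, or kept hidden under a commitment combined with a witness-indistinguishable argument. I would follow \cite{BLM26} and use their round-by-round commit-and-prove structure. This gives communication $r(\lambda)\poly(\lambda,1/\varepsilon)$ and verifier time $|x|r(\lambda)\poly(\lambda,1/\varepsilon)$, because each round costs a fixed polynomial plus reading $x$.

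Completeness loss is negligible by completeness of the inner argument.

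Soundness is the main obstacle. From a compiled prover $\tilde P^*$ we must extract an original prover $P^*$ that answers with actual long messages. Each commitment round is followed by a succinct proof. Using collapsing, we measure the committed opening, and the prover's state is disturbed only negligibly. We then use the post-quantum Kilian extraction of \cite{CMSZ22} to recover a consistent transcript prefix round by round. The $O(\varepsilon)$ loss comes from the inverse-polynomial accuracy of that state-preserving extraction at each of the $r$ rounds, with $\varepsilon$ rescaled by $r$.

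I expect the hard step to be showing that the verifier's quantum messages, which are entangled with $s$, do not interfere with extraction. My approach would be to give the extractor $s$ and have it simulate the verifier exactly. The extractor may sample $s$ itself, since $s$ is uniform and $P^*$ interacts with an honest $V$ holding that same $s$. So the reduction plays honest-verifier with known seed, and the extracted transcript is distributed as in $\Pi$.

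Assembling these steps, exactly as in \cite[Theorem~54 and its compiler theorem]{BLM26}, yields the stated bound.
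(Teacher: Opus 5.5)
There is a genuine gap. Your handling of the quantum messages via history-independence, and the commit-then-extract structure with $O(\varepsilon)$ accuracy budgeting, match the paper. What is missing is how to deal with the verifier's private seed $s$.

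\textbf{Delivering the verifier's long messages.} Your compiled protocol has no sound and succinct way to deliver the original verifier's long classical messages. In general, and in the protocol this lemma is applied to, these messages depend on secret parts of $s$. Examples are the table $c_{t,i}=f_k(t)\oplus f_{k_{i,1-t_i}}(t)$ and the bits $\theta_i\oplus\theta_i'$ in the delegation layer. The compiled verifier cannot send them, since that is not succinct. Nor can it ``reveal only the coins the prover needs'' and let the prover expand them. The coins needed are the PRF keys, and revealing them lets the server compute every $y_{t,b}$, which destroys the soundness of $\Pi$.

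The missing ingredient is the per-round chosen-input committed secure function sampling (SFS) protocol of \cite{BLM26}. After the prover commits to its message, with a state-preserving succinct argument of knowledge of the opening, the two parties run SFS. The verifier's input is $s$, the prover's input is its committed messages, and the prover learns only the value of the next-message function. This costs succinct verifier communication and admits a simulator that needs only that value. Only after all prover messages are committed does the verifier reveal $s$ for the final accepting-transcript argument.

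\textbf{The soundness reduction.} Your reduction does not produce the object the lemma asks for. $P^*$ must be a prover for $\Pi$ that interacts with the real $V$, whose seed is private. The soundness of $\Pi$ says nothing about a prover that is given $s$. Letting the extractor ``sample $s$ itself'' does not help either: the real verifier's seed is then independent of the one used for simulation. The extracted transcript and the real verifier's acceptance would then refer to different seeds.

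The paper instead keeps $s$ and all external registers as an untouched reference. $P^*$ proceeds as follows:
\begin{itemize}
\item It extracts each committed message with the state-preserving extractor and forwards it to the real $V$.
\item It feeds $V$'s classical reply into the SFS simulator to produce the compiled verifier's messages for $\tilde P^*$.
\item At the prescribed positions it forwards $V$'s quantum register once, without knowing $s$ and without cloning.
\end{itemize}
The final seed reveal and argument of knowledge are used only in the analysis. Extracting the final witness and invoking collision resistance shows that compiled acceptance implies the original predicate accepts the already-forwarded transcript, so $P^*$ never has to simulate that stage.
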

\begin{proof}
    \cite[Theorem~18]{BLM26} states the same theorem for a protocol with a completely classical verifier, but we argue that the same holds for any interactive protocol satisfying \Cref{def:qshi}. First, recall that we can assume without loss of generality that the verifier's classical computation is deterministic once it samples a random seed $s\sim \{0,1\}^{\poly(\lambda)}$ prior to any interaction. Keep the short randomness for quantum preparation explicitly in $s$ and derive the remaining classical coins using a quantum-secure PRG. Note that by history-independence (\Cref{def:qshi}), this also means that the quantum states sent by the verifier are fully determined prior to any interaction.

    Let us now recall the compilation procedure for a classical verifier from \cite{BLM26}. For any round of interaction:
    \begin{itemize}
        \item The prover computes locally its next message of the protocol and sends a succinct commitment to it, together with a state-preserving succinct argument of knowledge for its pre-image.
        \item The prover and the verifier engage in a chosen-input committed secure function sampling (SFS) protocol, where the verifier's input is $s$ and the prover's inputs are the committed messages. At the end of the interaction, the prover receives the value of the verifier's next-message function.
\end{itemize}
At the end of the interaction, the verifier reveals $s$, and the prover supplies a succinct argument of knowledge proving that the committed messages form an accepting transcript. 

The verifier also sends the original quantum states at their prescribed positions. For soundness, keep $s$ and all external registers as an untouched reference. Following \cite[Definitions~11 and~19, Lemma~20, and Section~5.4]{BLM26}, extract each committed message using the state-preserving extractor and simulate the corresponding chosen-input SFS from the next-message value on the extracted transcript. These guarantees preserve the joint state up to the prescribed distinguishing error, including correlations between the prover's quantum registers and $s$; the chosen-input transformation uses fresh garbling randomness. At quantum-message positions, the reduction forwards the original verifier's register once, without knowing $s$ or cloning the state. Choose the inverse-polynomial per-call accuracies so that all extraction and simulation errors sum to $O(\varepsilon)$; the fixed polynomial bound on $r$ absorbs the resulting overhead.

After the last commitment, extract the final accepting-transcript witness. Collision resistance forces its messages to equal those already extracted, except with negligible probability. Hence compiled acceptance implies that the original verifier's classical predicate accepts the extracted transcript, up to the stated errors. The final seed reveal and proof cannot change those fixed messages and can be omitted for this upper bound. Thus an original prover forwards the extracted classical messages, uses the verifier's classical replies as SFS-simulator inputs, and forwards its quantum replies as above. Honest correctness of the subprotocols separately gives negligible completeness loss. Their efficiency and \Cref{def:qshi} give the claimed bounds.
\end{proof}

Combining \Cref{lem:claw-correlation,lem:blind,lem:qma,lem:compression}, along with the fact that collapsing hashes imply the existence of one-way functions, we obtain a succinct argument for QMA with constant soundness. Sequential repetition of \Cref{thm:main} yields our main result \Cref{thm:intro}.

\begin{theorem}\label{thm:main}
Assuming collapsing hash functions, there exists a constant $0<s<1$ and a succinct argument for QMA with completeness $1-\negl(\lambda)$, soundness error $s$, total communication bounded by $\mathrm{poly}(\lambda)$, and verifier runtime $|x|\cdot\mathrm{poly}(\lambda)$.
\end{theorem}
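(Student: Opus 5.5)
The proof will be a composition of the lemmas just established, with each hypothesis of the next lemma checked against the conclusion of the previous one. We will treat the statement's error parameters carefully, since soundness is only a constant.

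\textbf{Step 1: one-way functions.} Collapsing hash functions imply collision-resistant compressing hash functions, and these are one-way. Hence quantum-secure one-way functions, and by \cite{Zhandry12} quantum-secure PRFs, exist. This gives the claw-state correlation protocol of \Cref{lem:claw-correct,lem:claw-correlation}.

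\textbf{Step 2: blind delegation.} \Cref{lem:blind} turns this into a blind delegation protocol for any polynomial-size $Q$. It is quantum-succinct and history-independent in the sense of \Cref{def:qshi}, and its round complexity is $\poly(\lambda,\delta)$.

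\textbf{Step 3: a constant-soundness argument.} Feeding this protocol into \Cref{lem:qma} gives an argument $\Pi$ for QMA with:
\begin{itemize}
\item completeness $1-\negl(\lambda)$,
\item soundness some constant $s_0<1$,
\item a fixed $\poly(\lambda)$ number of rounds $r(\lambda)$,
\item the property of \Cref{def:qshi} (by the moreover clause).
\end{itemize}
Here we must confirm that the polynomial bounding $r$ does not depend on the witness size or on the verification circuit. The point is that the $\mathsf{T}$-depth dependence is absorbed inside the construction of \cite{BLM26}, which is what \Cref{lem:qma} asserts.

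\textbf{Step 4: compression.} We apply \Cref{lem:compression} to $\Pi$, so we first check its remaining hypotheses. The verifier has no private input beyond its seed $s$. The public input $x$ describes the classical next-message and decision algorithms in size $O(|x|)$; these are the QMA verifier description plus fixed protocol logic. Choose $\varepsilon=1/\lambda$, or any inverse polynomial small enough that $s_0+O(\varepsilon)\le s:=(1+s_0)/2$ for large $\lambda$.

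The compiled protocol $\tilde\Pi$ then has:
\begin{itemize}
\item completeness $1-\negl(\lambda)$;
\item soundness at most $s_0+O(\varepsilon)+\negl(\lambda)\le s$ against every QPT prover, obtained by translating any compiled prover into an original prover $P^*$ and invoking soundness of $\Pi$;
\item communication $r(\lambda)\poly(\lambda,1/\varepsilon)=\poly(\lambda)$;
\item verifier time $|x|\,r(\lambda)\poly(\lambda,1/\varepsilon)=|x|\cdot\poly(\lambda)$.
\end{itemize}
The quantum preparation cost is included in the verifier-time bound by item (ii) of \Cref{def:qshi}.

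\textbf{Main obstacle.} The hard part is confirming that every polynomial in the chain is fixed independently of the QMA verifier's size and the witness length. In particular, the claw-generation parameter $T$, which bounds the number of $\mathsf{T}$ gates, may be an arbitrary polynomial. This is harmless because $T$ enters the client's quantum cost only through $\log T$ and enters otherwise only through classical messages, and those are compressed away. I would check this explicitly at each composition step rather than rely on it being implicit.
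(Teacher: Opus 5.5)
Your proposal is correct and follows the paper's own argument. You obtain one-way functions from collapsing hashes, chain \Cref{lem:claw-correlation,lem:blind,lem:qma}, and apply \Cref{lem:compression} with an inverse-polynomial accuracy so that soundness becomes $s=(1+s_0)/2$ for large $\lambda$, with communication and verifier time coming from the fixed round bound; your extra checks (the $O(|x|)$ description-size hypothesis, and $T$ entering the quantum cost only via $\log T$) are left implicit in the paper but are consistent with it.
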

\begin{proof}
Let $s_0<1$ be the soundness in \Cref{lem:qma}. Choose the compilation accuracy so that \Cref{lem:compression} gives soundness at most $s=(1+s_0)/2<1$ for sufficiently large $\lambda$, while honest completeness remains $1-\negl(\lambda)$. The fixed round bound gives the claimed communication and verifier time. For \Cref{thm:intro}, repeat sequentially $\lambda$ times with fresh verifier randomness, accepting only if every copy accepts, and give the honest prover fresh witness copies. Completeness loss is still negligible by a union bound. Apply \Cref{thm:sequential-repetition} to the search game whose private target is $0$ on acceptance and $\bot$ otherwise, soundness is at most $s^\lambda+\negl(\lambda)$.
\end{proof}

\subsection*{Acknowledgments} 
JB is supported by the Air Force Office of Scientific Research under agreement number
FA9550261B239. GM is supported by the European Research Council through an ERC Starting Grant (Grant agreement No.~101077455, ObfusQation) and by the Deutsche Forschungsgemeinschaft (DFG, German Research Foundation) under Germany's Excellence Strategy - EXC 2092 CASA – 390781972. 

\phantomsection
\addcontentsline{toc}{section}{References}
\bibliographystyle{alpha}
\bibliography{refs}

@inproceedings{Kil92,
  author    = {Joe Kilian},
  title     = {A Note on Efficient Zero-Knowledge Proofs and Arguments (Extended Abstract)},
  booktitle = {Proceedings of the 24th Annual ACM Symposium on Theory of Computing (STOC)},
  pages     = {723--732},
  publisher = {ACM},
  year      = {1992},
  doi       = {10.1145/129712.129782},
  note      = {\url{https://doi.org/10.1145/129712.129782}},
  url       = {https://doi.org/10.1145/129712.129782}
}

@article{ALM+98,
  author  = {Sanjeev Arora and Carsten Lund and Rajeev Motwani and Madhu Sudan and Mario Szegedy},
  title   = {Proof Verification and the Hardness of Approximation Problems},
  journal = {Journal of the ACM},
  volume  = {45},
  number  = {3},
  pages   = {501--555},
  year    = {1998},
  doi     = {10.1145/278298.278306},
  note      = {\url{https://doi.org/10.1145/278298.278306}},
  url     = {https://doi.org/10.1145/278298.278306}
}

@article{AS98,
  author  = {Sanjeev Arora and Shmuel Safra},
  title   = {Probabilistic Checking of Proofs: A New Characterization of {NP}},
  journal = {Journal of the ACM},
  volume  = {45},
  number  = {1},
  pages   = {70--122},
  year    = {1998},
  doi     = {10.1145/273865.273901},
  note      = {\url{https://doi.org/10.1145/273865.273901}},
  url     = {https://doi.org/10.1145/273865.273901}
}

@inproceedings{BKL+22,
  author    = {James Bartusek and {Tauman Kalai}, Yael and Alex Lombardi and Fermi Ma and Giulio Malavolta and Vinod Vaikuntanathan and Thomas Vidick and Lisa Yang},
  title     = {Succinct Classical Verification of Quantum Computation},
  booktitle = {Advances in Cryptology---CRYPTO 2022},
  series    = {Lecture Notes in Computer Science},
  volume    = {13508},
  pages     = {195--211},
  publisher = {Springer},
  year      = {2022},
  doi       = {10.1007/978-3-031-15979-4_7},
  note      = {\url{https://arxiv.org/abs/2206.14929}},
  url       = {https://arxiv.org/abs/2206.14929}
}

@inproceedings{MNZ24,
  author    = {Tony Metger and Anand Natarajan and Tina Zhang},
  title     = {Succinct Arguments for {QMA} from Standard Assumptions via Compiled Nonlocal Games},
  booktitle = {2024 IEEE 65th Annual Symposium on Foundations of Computer Science (FOCS)},
  pages     = {1193--1201},
  publisher = {IEEE},
  year      = {2024},
  doi       = {10.1109/FOCS61266.2024.00078},
  note      = {\url{https://arxiv.org/abs/2404.19754}},
  url       = {https://arxiv.org/abs/2404.19754}
}

@inproceedings{GTKNV25,
  author    = {Sam Gunn and {Tauman Kalai}, Yael and Anand Natarajan and Vill{\'a}nyi, {\'A}gi},
  title     = {Classical Commitments to Quantum States},
  booktitle = {Proceedings of the 57th Annual ACM Symposium on Theory of Computing (STOC)},
  pages     = {234--244},
  publisher = {ACM},
  year      = {2025},
  doi       = {10.1145/3717823.3718264},
  note      = {\url{https://arxiv.org/abs/2404.14438}},
  url       = {https://arxiv.org/abs/2404.14438}
}

@inproceedings{BLM26,
  author    = {James Bartusek and Jiahui Liu and Giulio Malavolta},
  title     = {A Modular Approach to Succinct Arguments for {QMA}},
  booktitle = {Advances in Cryptology---EUROCRYPT 2026},
  series    = {Lecture Notes in Computer Science},
  volume    = {16547},
  pages     = {446--474},
  publisher = {Springer},
  year      = {2026},
  doi       = {10.1007/978-3-032-25336-1_16},
  note      = {Full version: \url{https://arxiv.org/abs/2606.10408v1}},
  url       = {https://arxiv.org/abs/2606.10408v1}
}

@inproceedings{BK25,
  author    = {James Bartusek and Dakshita Khurana},
  title     = {On the Power of Oblivious State Preparation},
  booktitle = {Advances in Cryptology---CRYPTO 2025},
  series    = {Lecture Notes in Computer Science},
  volume    = {16001},
  pages     = {575--607},
  publisher = {Springer},
  year      = {2025},
  doi       = {10.1007/978-3-032-01878-6_19},
  note      = {Full version: \url{https://arxiv.org/abs/2411.04234v1}},
  url       = {https://arxiv.org/abs/2411.04234v1}
}

@inproceedings{BMM25,
  author    = {Pedro Branco and Giulio Malavolta and Zayd Maradni},
  title     = {Fully-Homomorphic Encryption from Lattice Isomorphism},
  booktitle = {Theory of Cryptography---TCC 2025},
  series    = {Lecture Notes in Computer Science},
  volume    = {16268},
  pages     = {220--252},
  publisher = {Springer},
  year      = {2026},
  doi       = {10.1007/978-3-032-12287-2_8},
  note      = {First published online in December 2025. \url{https://eprint.iacr.org/2025/993}},
  url       = {https://eprint.iacr.org/2025/993}
}

@inproceedings{CMSZ22,
  author    = {Alessandro Chiesa and Fermi Ma and Nicholas Spooner and Mark Zhandry},
  title     = {Post-Quantum Succinct Arguments: Breaking the Quantum Rewinding Barrier},
  booktitle = {2021 IEEE 62nd Annual Symposium on Foundations of Computer Science (FOCS)},
  pages     = {49--58},
  publisher = {IEEE},
  year      = {2022},
  doi       = {10.1109/FOCS52979.2021.00014},
  note      = {\url{https://arxiv.org/abs/2103.08140}},
  url       = {https://arxiv.org/abs/2103.08140}
}

@inproceedings{GJMZ23,
  author    = {Sam Gunn and Nathan Ju and Fermi Ma and Mark Zhandry},
  title     = {Commitments to Quantum States},
  booktitle = {Proceedings of the 55th Annual ACM Symposium on Theory of Computing (STOC)},
  pages     = {1579--1588},
  publisher = {ACM},
  year      = {2023},
  doi       = {10.1145/3564246.3585198},
  note      = {\url{https://arxiv.org/abs/2210.05138}},
  url       = {https://arxiv.org/abs/2210.05138}
}

@inproceedings{KLVY23,
  author    = {Yael Kalai and Alex Lombardi and Vinod Vaikuntanathan and Lisa Yang},
  title     = {Quantum Advantage from Any Non-Local Game},
  booktitle = {Proceedings of the 55th Annual ACM Symposium on Theory of Computing (STOC)},
  pages     = {1617--1628},
  publisher = {ACM},
  year      = {2023},
  doi       = {10.1145/3564246.3585164},
  note      = {\url{https://arxiv.org/abs/2203.15877}},
  url       = {https://arxiv.org/abs/2203.15877}
}

@inproceedings{NZ23,
  author    = {Anand Natarajan and Tina Zhang},
  title     = {Bounding the Quantum Value of Compiled Nonlocal Games: From {CHSH} to {BQP} Verification},
  booktitle = {2023 IEEE 64th Annual Symposium on Foundations of Computer Science (FOCS)},
  pages     = {1342--1348},
  publisher = {IEEE},
  year      = {2023},
  doi       = {10.1109/FOCS57990.2023.00081},
  note      = {\url{https://arxiv.org/abs/2303.01545v2}},
  url       = {https://arxiv.org/abs/2303.01545v2}
}

@inproceedings{BKMSW25,
  author    = {Kaniuar Bacho and Alexander Kulpe and Giulio Malavolta and Simon Schmidt and Michael Walter},
  title     = {Compiled Nonlocal Games from Any Trapdoor Claw-Free Function},
  booktitle = {Advances in Cryptology---CRYPTO 2025},
  series    = {Lecture Notes in Computer Science},
  volume    = {16001},
  pages     = {642--673},
  publisher = {Springer},
  year      = {2025},
  doi       = {10.1007/978-3-032-01878-6_21},
  note      = {\url{https://eprint.iacr.org/2024/1829}},
  url       = {https://eprint.iacr.org/2024/1829}
}

@inproceedings{Zha21,
  author    = {Jiayu Zhang},
  title     = {Succinct Blind Quantum Computation Using a Random Oracle},
  booktitle = {Proceedings of the 53rd Annual ACM SIGACT Symposium on Theory of Computing (STOC)},
  pages     = {1370--1383},
  publisher = {ACM},
  year      = {2021},
  doi       = {10.1145/3406325.3451082},
  note      = {\url{https://arxiv.org/abs/2004.12621}},
  url       = {https://arxiv.org/abs/2004.12621}
}

@inproceedings{GSLW19,
  author    = {Andr{\'a}s Gily{\'e}n and Yuan Su and Guang Hao Low and Nathan Wiebe},
  title     = {Quantum Singular Value Transformation and Beyond: Exponential Improvements for Quantum Matrix Arithmetics},
  booktitle = {Proceedings of the 51st Annual ACM SIGACT Symposium on Theory of Computing (STOC)},
  pages     = {193--204},
  publisher = {ACM},
  year      = {2019},
  doi       = {10.1145/3313276.3316366},
  note      = {Full version: \url{https://arxiv.org/abs/1806.01838v1}},
  url       = {https://arxiv.org/abs/1806.01838v1}
}

@inproceedings{Zhandry12,
  author    = {Mark Zhandry},
  title     = {How to Construct Quantum Random Functions},
  booktitle = {2012 IEEE 53rd Annual Symposium on Foundations of Computer Science (FOCS)},
  pages     = {679--687},
  publisher = {IEEE},
  year      = {2012},
  doi       = {10.1109/FOCS.2012.37},
  note      = {\url{https://eprint.iacr.org/2012/182}},
  url       = {https://eprint.iacr.org/2012/182}
}

@inproceedings{Unr16,
  author    = {Dominique Unruh},
  title     = {Computationally Binding Quantum Commitments},
  booktitle = {Advances in Cryptology---EUROCRYPT 2016},
  series    = {Lecture Notes in Computer Science},
  volume    = {9666},
  pages     = {497--527},
  publisher = {Springer},
  year      = {2016},
  doi       = {10.1007/978-3-662-49896-5_18},
  note      = {\url{https://eprint.iacr.org/2015/361}},
  url       = {https://eprint.iacr.org/2015/361}
}

@misc{cryptoeprint:2026/2099,
      author = {Alessandro Chiesa and Zihan Hu},
      title = {Succinct Arguments for {QMA} in the Quantum Random Oracle Model},
      howpublished = {Cryptology {ePrint} Archive, Paper 2026/2099},
      year = {2026},
      url = {https://eprint.iacr.org/2026/2099}
}

@InProceedings{sun_et_al:LIPIcs.CCC.2026.4,
  author =	{Sun, Baocheng and Vidick, Thomas},
  title =	{{Probabilistically Checking Quantum Proofs, with Interaction}},
  booktitle =	{41st Computational Complexity Conference (CCC 2026)},
  pages =	{4:1--4:49},
  series =	{Leibniz International Proceedings in Informatics (LIPIcs)},
  ISBN =	{978-3-95977-437-6},
  ISSN =	{1868-8969},
  year =	{2026},
  volume =	{383},
  editor =	{Moshkovitz, Dana},
  publisher =	{Schloss Dagstuhl -- Leibniz-Zentrum f{\"u}r Informatik},
  address =	{Dagstuhl, Germany},
  URL =		{https://drops.dagstuhl.de/entities/document/10.4230/LIPIcs.CCC.2026.4},
  URN =		{urn:nbn:de:0030-drops-270463},
  doi =		{10.4230/LIPIcs.CCC.2026.4}
}

\end{document}